\def\eqref#1{equation~\ref{#1}}
\def\1{\bm{1}}
\def\vx{{\bm{x}}}
\def\vy{{\bm{y}}}
\def\vz{{\bm{z}}}
\def\mA{{\bm{A}}}
\def\mB{{\bm{B}}}
\def\mI{{\bm{I}}}
\def\mX{{\bm{X}}}
\DeclareMathAlphabet{\mathsfit}{\encodingdefault}{\sfdefault}{m}{sl}
\SetMathAlphabet{\mathsfit}{bold}{\encodingdefault}{\sfdefault}{bx}{n}
\theoremstyle{plain}
\newtheorem{theorem}{Theorem}
\newtheorem{proposition}{Proposition}
\theoremstyle{definition}
\newtheorem{definition}[theorem]{Definition}
\theoremstyle{remark}
\title{Multi-scale Conditional Generative Modeling for Microscopic Image Restoration}
\author{
  Luzhe Huang\thanks{Electrical and Computer Engineering Department,
 University of California, Los Angeles, Los Angeles, CA 90095}
  \And
  Xiongye Xiao\thanks{Department of Electrical and Computer Engineering, University of Southern California, Los Angeles, CA 90089}
  \And
  Shixuan Li\footnotemark[2]
  \And 
  Jiawen Sun\thanks{Department of Computer Science, University of California, Los Angeles, Los Angeles, CA 90095}
  \And 
  Yi Huang\thanks{University of Chinese Academy of Sciences, Shenzhen, China
  \\Corresponding author: Xiongye Xiao, xiongyex@usc.edu}
  \And
  Aydogan Ozcan\footnotemark[1]
  \And
  Paul Bogdan\footnotemark[2]
}
\begin{document}

\maketitle

\begin{abstract}
The advance of diffusion-based generative models in recent years has revolutionized state-of-the-art (SOTA) techniques in a wide variety of image analysis and synthesis tasks, whereas their adaptation on image restoration, particularly within computational microscopy remains theoretically and empirically underexplored. In this research, we introduce a multi-scale generative model that enhances conditional image restoration through a novel exploitation of the Brownian Bridge process within wavelet domain. By initiating the Brownian Bridge diffusion process specifically at the lowest-frequency subband and applying generative adversarial networks at subsequent multi-scale high-frequency subbands in the wavelet domain, our method provides significant acceleration during training and sampling while sustaining a high image generation quality and diversity on par with SOTA diffusion models. Experimental results on various computational microscopy and imaging tasks confirm our method's robust performance and its considerable reduction in its sampling steps and time. This pioneering technique offers an efficient image restoration framework that harmonizes efficiency with quality, signifying a major stride in incorporating cutting-edge generative models into computational microscopy workflows. 

\end{abstract}
\section{Introduction}
\label{sec:intro}

Within the last decade, the landscape of image synthesis has been radically transformed by the advent of generative models (GMs) \citep{scoresde, ddpm, song2019generative}. Among their broad success in various image synthesis applications, image restoration, including super-resolution, shadow removal, inpainting, etc, have caught much attention due to their importance in various practical scenarios. Image restoration aims to recover high-quality target image from low-quality images measured by an imaging system with assorted degradation effects, e.g., downsampling, aberration and noise. Numerous tasks in photography, sensing and microscopy can be formulated as image restoration problems, and therefore the importance of image restoration algorithms is self-evident in practical scenarios \citep{pix2pix, deblurgan, weigert2018content, wang2019deep}.

Due to the ill-posedness of most image restoration problems, the application of generative learning becomes crucial for achieving high-quality image reconstruction. The wide applications of deep learning (DL)-based generative models in image restoration began with the success of generative adversarial networks (GANs)~\citep{gan}. The emergence of conditional GANs achieved unprecedented success and outperformed classical algorithms in various applications of computational imaging and microscopy~\citep{cyclegan, pix2pix, pix2pixHD, deblurgan, imagecolorization, stylegan, weigert2018content, wang2019deep}. Although Generative Adversarial Networks (GANs) have achieved remarkable success in image restoration, they are also known to be prone to training instability and mode collapse. These issues significantly restrict the diversity of outputs produced by GAN models~\citep{kodali2017convergence, gui2021review}. Recently, diffusion models (DMs) were intensively studied and outperformed GANs in various image generation tasks~\citep{ddpm, dhariwal2021diffusion}. Derived from the stochastic diffusion process, DMs employ neural networks to approximate the reverse diffusion process and sample target images from white noise with high quality and good mode coverage. Moreover, DMs have also been applied to image restoration, including super-resolution, dehazing, colorization, etc~\citep{sr3, ldm, li2023bbdm, batzolis2021conditional, saharia2022palette, luo2023image}. Most of the existing methods regard the low-quality image as one additional condition in the reverse process and utilize it as one of the network arguments for inference. However, these image restoration models lack a clear modelling of the conditional image in the forward process and a theoretical guarantee that the terminal state of the diffusion is closely related to and dependent on the condition.

\begin{wrapfigure}{l}{.55\textwidth}
    \centering
    \vspace{-.5cm}
    \setlength{\belowcaptionskip}{-.16 cm}
    \includegraphics[width=.55\columnwidth]{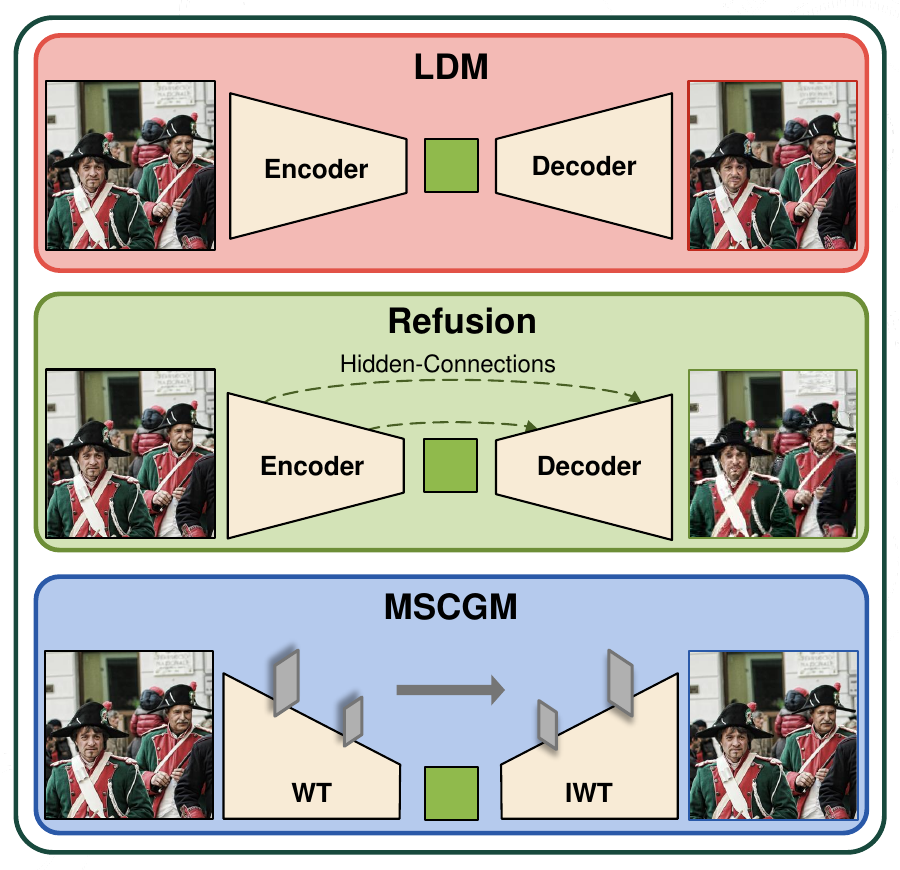}
    \caption{\textbf{Inherent information loss in autoencoder backbones.} The same high-quality image passes the three generative models without going through the diffusion process. Public pre-trained checkpoints are used for LDM~\citep{ldm} and Refusion~\citep{luo2023refusion}. LDM and Refusion suffer from reconstruction loss due to lossy autoencoder backbones while the wavelet and inverse wavelet transform pairs are lossless.}
    \label{Fig1}
\end {wrapfigure}

Besides, compared to GANs, DMs are further limited by time-consuming sampling and iterative refinement. Many efforts have been made to improve the sampling speed while ensuring the quality of image generation, among which the Latent Diffusion Model (LDM) \citep{ldm} stands out as a notable example. LDM projects the diffusion process into a low-dimensional latent space of a pre-trained autoencoder, and hence greatly reduces the computational resource required for high-resolution image generation and restoration. The success of LDM inspired a few studies leveraging the latent representations of pretrained networks to reconstruct high-quality images from low-quality images \citep{yin2022diffgar, luo2023image}. However, the performance of diffusion models in these approaches is highly dependent on the quality of latent representations of their pretrained backbones. There exist inevitable trade-offs between the compactness of the latent space and the reconstruction fidelity of the autoencoder, and between the normality of its latent distribution and its sample diversity~\citep{sr3, li2022srdiff, luo2023image}. Some study, e.g., Refusion, has explored to improve the fidelity of autoencoder by adding hidden connections between the encoder and decoder~\citep{luo2023refusion}, whereas the inherent information loss of learned compression cannot be completely eliminated. Though these methods reported improved perceptual scores between their outputs and reference images, in image restoration tasks where pixel-wise structural correspondence is emphasized, the reconstruction quality is constrained by the fidelity of autoencoder backbones, producing noticeable distortions and hallucinations as shown in Figure \ref{Fig1}.


On the other hand, microscopy image restoration problems are usually composed of complicated degradation models, sparse image distributions and strict constraints on structural correspondence of the outputs to the conditional images \citep{william2018denoise, solomon2018sparsity, zhao2022sparse}. To preserve the signal sparsity of common biological samples and local feature correspondence to the conditional image, existing methods leveraged CNN-based models and trained on carefully registered datasets \citep{rivenson2017deep, wang2019deep, wu2019three}. Moreover, to enhance training and inference stability, conditional GAN were widely utilized to directly predict the corresponding ground truth image in a deterministic manner \citep{wang2019deep}. However, the adaptability and effectiveness of advanced generative models like DMs to microscopy image restoration tasks are under-explored.

To address the aforementioned limitations, here we introduce a novel multi-scale conditional generative model (MSCGM) for image restoration based on multi-scale wavelet transform and Brownian bridge stochastic process. For one thing, multi-scale wavelet transform effectively and losslessly compresses the spatial dimensions of conditional images, eliminating the lossy encoding process of current autoencoder-based DMs. Notably, our method does not involve the pre-training of autoencoders in competitive methods on a sufficiently large and diverse dataset sampled from the image domain. For another, Brownian bridge stochastic process incorporates the modelling of low-quality conditional image into both the forward and reverse diffusion process and better utilizes the information of the conditional images. Besides, we theoretically analyze the distributions of low- and high-frequency wavelet subbands and apply Brownian bridge diffusion process (BBDP) and GAN to the multi-scale generation of low- and high-frequency subbands, respectively. In sum, the contributions of this work are three-fold:

1. We are the first to factorize the conditional image generation within multi-scale wavelet domains, establishing a theoretical groundwork for a multi-scale conditional generative modeling;

2. Capitalizing on the unique distribution characteristics of wavelet subbands, we propose the innovative MSCGM, which seamlessly integrates BBDP and GAN;

3. We evaluate the MSCGM on various image restoration tasks, demonstrating its superior performance in both sampling speed and image quality than competitive methods.


\section{Related works}

\subsection{Microscopy image restoration}

As an important branch of computational imaging, computational microscopy springs up in recent years and aims to restore high-quality, multi-dimensional images from low-quality, low-dimensional measurements, usually with under-resourced equipment. Since the first work on microscopy image super-resolution reported in 2017~\cite{rivenson2017deep}, DL has enabled a wide spectrum of novel applications that were impossible with conventional optical technologies, e.g., microscopy image super-resolution surpassing the physical resolution limit of microscopic imaging systems~\cite{wang2019deep}, volumetric imaging reconstructing 3D sample volumes from sparse 2D measurements~\cite{wu2019three}, and virtual image labelling to match the contrast conventionally provided by chemical or biological markers~\cite{rivenson2019virtual}. Compared to general image restoration in computational imaging, microscopy image restoration mainly differs in two aspects: (1) The degeneration process, including the transfer function, noise and aberration of the imaging system is generally complex, unknown and hard to measure precisely; and such degeneration process could vary significantly in real-world scenarios due to the variations of subjects, hardware and imaging protocols. (2) Strict pixel-wise correspondence between output and ground truth images and consistency with physical laws are generally emphasized~\cite{barbastathis2019use}.

\subsection{Generative models}
GANs are well-known for generating high-qaulity, photorealistic samples rapidly~\cite{gan, gui2021review}. Through training a discriminator that tells ground truth images apart from fake images generated by the generator network, GANs outperformed traditional CNNs trained with hand-crafted, pixel-based structural losses such as $L_1$ and $L_2$ by providing a high-level, learnable perceptual objective. Conditional GANs such as Pix2Pix~\cite{pix2pix}, pix2pix HD~\cite{pix2pixHD} and starGAN~\cite{stargan} have been successfully applied in a wide spectrum of image-to-image translation and image restoration tasks, including image colorization~\cite{imagecolorization}, style transfer~\cite{stylegan}, image deblurring~\cite{deblurgan}, etc. Unsupervised image-to-image translation has also been extensively explored, such as cycleGAN~\cite{cyclegan}, UNIT~\cite{unit}, DualGAN~\cite{dualgan}, etc. In the fields of biomedical and microscopy imaging, researchers have also explored the applications of GANs, e.g., reconstructing low-dose CT and MRI images~\cite{yang2018low, hammernik2018learning}, denoising microscopy images~\cite{weigert2018content}, super-resolving diffraction-limited microscopy images~\cite{wang2019deep}, among others.

Alternatively, transformer and related architectures have recently emerged and shown superior performance over convolutional neural networks (CNNs)-based GANs. Swin transformer~\cite{liu2021swin} and SwinIR~\cite{liang2021swinir} have established a strong transformer-based baseline with competitive performance to CNNs. TransGAN substituted CNNs in the common GAN framework with transformers and improved the overall performance~\cite{jiang2021transgan}. More recently, diffusion models (DMs) have been introduced and proved to be the state-of-the-art generative model in various image generation benchmarks~\cite{ddpm, nichol2021improved}. Dhariwal et al. reported diffusion models beat GANs on various image synthesis tasks~\cite{dhariwal2021diffusion}. Nevertheless, the democratization of diffusion models was limited by its huge demand for computational resources in both training and sampling. Recent advancements in fast sampling diffusion models \cite{xiao2021tackling, ldm, song2020denoising, phung2023wavelet, kong2021fast, ho2022cascaded} have accelerated the sampling process in image generation tasks. DDGAN \cite{xiao2021tackling} effectively combines the strengths of GANs with diffusion principles for quicker outputs, while LDM utilizes a compressed latent space to speed up generation. DDIM \cite{song2020denoising} optimizes denoising steps for efficiency, FastDPM \cite{kong2021fast} focuses on algorithmic improvements for rapid sampling, and CDM \cite{ho2022cascaded} employs a staged approach to simplify the diffusion process.


\section{Methods}
\label{sec:methods}

\subsection{Preliminaries}
The outstanding success and wide applications of score-based diffusion models have been witnessed in the past years. Generally, for a Gaussian process $\{\vx_t, t=1,\cdots,T\}$ defined as:
\begin{equation}
    q(\vx_t|\vx_{t-1}) = \mathscr{N}(\vx_t; \sqrt{1-\beta_t}\vx_{t-1}, \beta_t\mI), t=1,\cdots,T,
\end{equation}
the denoising DM attempts to solve the reverse process by parameterizing the conditional reverse distribution as:
\begin{equation}
    p_{\theta}(\vx_{t-1}|\vx_t) = \mathscr{N} \Big(\vx_t; \frac{1}{\sqrt{\alpha_t}} \big(\vx_t - \frac{1-\alpha_t}{\sqrt{1-\bar{\alpha}_t}} \bm{\epsilon}_{\theta}(\vx_t, t) \big), \sigma_t \mI \Big).
\end{equation}

Here $\alpha_t, \bar{\alpha}_t, \beta_t, \sigma_t$ are constants, and $\bm{\epsilon}_{\theta}$ is the network estimating the mean value of the reverse process. Despite the high image quality achieved by the denoising process, the total sampling steps $T$ can be very large and make the sampling process time-consuming.
\begin{theorem}
    For a given error $\varepsilon$ between the generated distribution $p_{\theta}$ and the true distribution $p$, the sampling steps needed can be expressed by \cite{guth2022wavelet}:
    \begin{equation}
        T = O(\varepsilon^{-2}\kappa^3),
    \end{equation}
    where $\kappa$ is the condition number of the covariance matrix of $p$. 
    \label{thm:step}
\end{theorem}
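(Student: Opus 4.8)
The plan is to exploit the fact that the bound is phrased purely through the condition number $\kappa$ of $\mathrm{Cov}(p)$, which signals that the analysis should be carried out in the Gaussian regime where the forward process of Eq.~(1) and its exact reverse are analytically solvable, and only then transferred to a general target through a local quadratic (second-moment) comparison. Concretely, take $p = \mathscr{N}(\vmu,\Sigma)$. Because the forward kernel is linear--Gaussian, every marginal $p_t$ stays Gaussian with covariance $\Sigma_t=\bar\alpha_t\Sigma+(1-\bar\alpha_t)\mI$, and the exact score $\nabla\log p_t(\vx)$ is the affine field $-\Sigma_t^{-1}(\vx-\bar\alpha_t^{1/2}\vmu)$. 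Diagonalising $\Sigma=U\Lambda U^\top$ commutes with the whole construction, so the $d$-dimensional chain factorises into $d$ independent scalar Ornstein--Uhlenbeck bridges indexed by the eigenvalues $\lambda_1,\dots,\lambda_d\in[\lambda_{\min},\lambda_{\max}]$, and the total error (say in $\KL(p\,\|\,p_\theta)$) decomposes as a sum of the per-mode errors. This reduces the theorem to a one-dimensional accounting that can then be optimised over $\lambda\in[\lambda_{\min},\lambda_{\max}]$.

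For a single mode of variance $\lambda$ I would isolate the error sources separating the discretised reverse chain $p_\theta$ from $p$: (i) an initialisation mismatch, since sampling starts from $\mathscr{N}(0,1)$ rather than from the true terminal marginal $p_T$; (ii) the time-discretisation error of the Euler step of size $h$; and (iii) the score-estimation error, which I set to zero in this idealised statement. For (i), $\Sigma_t\to\mI$ exponentially and the slowest mode to equilibrate is the one tied to the extreme eigenvalue, so driving the terminal mismatch below tolerance forces an equilibration (integration) length $\tau$ set by the spectral gap, which after normalisation scales as $\tau\asymp\kappa$. For (ii), a Girsanov / change-of-measure bound (in the style of the discretisation analyses of Chen et al.\ and De~Bortoli) controls the cost of each Euler step by $h^2$ times the stiffness of the reverse drift; since the drift Jacobian is $-\Sigma_t^{-1}$, that stiffness is of order $\lambda_{\min}^{-1}\asymp\kappa$ near the data end.

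Assembling the pieces produces the exponent bookkeeping. Summing the per-step cost $\asymp h^2\kappa$ over the $\tau/h$ steps gives a discretisation error $\asymp \tau\,h\,\kappa$; requiring this to stay within the $\KL$ budget $\varepsilon^2$ (the target total-variation error $\varepsilon$ converted via Pinsker's inequality) forces $h\lesssim \varepsilon^2/(\tau\kappa)\asymp\varepsilon^2\kappa^{-2}$ once $\tau\asymp\kappa$ is substituted. The number of steps is then
\begin{equation}
T \;=\; \frac{\tau}{h} \;\asymp\; \kappa\cdot\frac{\kappa^2}{\varepsilon^2} \;=\; O\!\big(\varepsilon^{-2}\kappa^3\big),
\end{equation}
where one factor of $\kappa$ comes from the equilibration length of the slowest mode and two from the interaction of the drift stiffness with the error budget. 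The slow mode dictates $\tau$ while the stiff mode dictates $h$, so it is precisely their ratio $\kappa$ that is injected into $T$ three times over.

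The main obstacle I anticipate is step (ii): controlling the propagation of the Euler error through $T$ stiff steps without the constant degrading to $e^{O(\kappa)}$. A direct Gr\"onwall estimate on a drift of operator norm $\sim\kappa$ is exponentially lossy, and obtaining only the polynomial $\kappa^3$ requires using the contractivity of the reverse process: each intermediate marginal $p_t$ is strongly log-concave with curvature $\Sigma_t^{-1}$, so the expansion induced by the stiff drift is offset by the mixing it simultaneously produces, and the two effects must be tracked jointly (e.g.\ via a one-step coupling or interpolation argument rather than a worst-case bound). The secondary difficulty is transferring the exact-Gaussian computation to a general target sharing the same $\kappa$; this needs a regularity hypothesis on $p$ (bounded, Lipschitz score and finite moments) so that the non-Gaussian corrections to the per-step error remain lower order and do not disturb the leading $\varepsilon^{-2}\kappa^3$ scaling.
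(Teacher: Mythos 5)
Your plan reproduces the right skeleton --- reduce to the Gaussian case where the score is affine, diagonalize so the error is a sum over eigenvalue modes, split the error into an initialization part and a discretization part, and tune horizon against step size --- and this is indeed how the cited analysis (Guth et al., restated in Appendix~\ref{sec:score_reg}) is organized. But your exponent bookkeeping rests on a claim that is false: $\tau \asymp \kappa$ for the equilibration length, attributed to a ``spectral gap.'' The forward process is the time-rescaled Ornstein--Uhlenbeck process with drift matrix $-\mathrm{Id}$; every mode relaxes at the same unit rate, independent of $\Sigma$, which enters only through the initial condition. The terminal mismatch of mode $i$ decays like $e^{-2T}\lvert \lambda_i - 1\rvert$, so the horizon needed is \emph{logarithmic} in $\kappa$ and $1/\varepsilon$, exactly as encoded by $\Psi_T = f\bigl(e^{-4T}\bigl|\mathrm{Tr}\bigl((\Sigma-\mathrm{Id})\Sigma\bigr)\bigr|\bigr)$ in Theorem~\ref{theorem1}. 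In the actual accounting the polynomial factors come from the discretization term: under the normalization $\mathrm{Tr}(\Sigma)=d$ one has $\lambda_{\min}\ge 1/\kappa$, so the trace inside $\Psi_{\Delta t}$ is of order $d\kappa$, and $f(t)\le t$ forces $\Delta t \lesssim \epsilon/\kappa$; combined with $T \asymp \log(\kappa^2/\epsilon)$ this yields $T/\Delta t \asymp \epsilon^{-1}\kappa\log(\kappa^2/\epsilon)$, and the clean $O(\epsilon^{-2}\kappa^3)$ of the theorem is the generous envelope obtained by bounding the logarithm by its argument. So your ``one power of $\kappa$ from mixing, two from stiffness'' decomposition assigns the exponents to the wrong mechanisms: you land on the stated rate only because that rate is loose, and the $\tau\asymp\kappa$ step would not survive scrutiny. (Note also that the $\epsilon$ in the paper's proposition is a per-dimension KL error, not the TV error you convert through Pinsker, which would shift the $\varepsilon$-exponent.)

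The obstacle you flag in step (ii) --- propagating Euler error through stiff steps without an $e^{O(\kappa)}$ Gr\"onwall loss --- is likewise a problem the paper's proof never encounters. With exact affine scores, the Euler-discretized reverse chain is itself a linear Gaussian recursion, so its marginal law is computed \emph{exactly}: Theorem~\ref{theorem_sp} gives $\Sigma^{\widehat{N}} = \Sigma + \exp(-4T)\Sigma^{\widehat{T}} + \Delta t\, \Psi^{\widehat{T}} + (\Delta t)^2 R^{\widehat{T},\Delta t}$ (and similarly for the mean), after which the KL divergence is evaluated in closed form and expanded; no coupling, Girsanov, or contractivity-versus-expansion argument is needed, and the trace formulas already implement your per-mode summation. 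Your proposed toolkit --- score regularity, per-step $\sqrt{\Delta t}$ control, moment assumptions --- is precisely what the cited work uses for the \emph{non-Gaussian} extension (Theorem~\ref{theorem2}, where $K=\sup_{x,t}\lVert \nabla^2 \log p_t(x)\rVert$ plays the role of your stiffness and $\Psi_{\Delta t}\propto\sqrt{\Delta t}\,(1+K+\cdots)$), so your second-moment transfer idea is aimed at the right target; but for the Gaussian statement actually being proved, the exact computation is both simpler and sharper than the perturbative program you sketch, and your proposal as written leaves both of its load-bearing steps open: (i) is justified by an incorrect scaling, and (ii) is acknowledged but unresolved.
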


The proof of Theorem \ref{thm:step} is detailed in Appendix \ref{sec:score_reg}.

Therefore, for highly non-Gaussian distributed images, e.g., microscopy images (please refer to Appendix \ref{sec:distribution}, where we theoretically and statistically demonstrate the high degree of non-Gaussianity of microscopy datasets), which tends to have high sparsity, resolution and contrast, standard diffusion models may not be practical due to their slow sampling speed and large sampling steps required for such distributions. 

To overcome this limitation, we turned to multi-scale wavelet transform (as detailed in Appendix \ref{sup: Wavelet Transform}), which offers an excellent latent space (i.e., wavelet domain) for generative modeling. The wavelet domain not only facilitates lossless compression but also provides low-frequency coefficients with near-Gaussian distributions. Consequently, we transform the diffusion process into the wavelet domain, thereby introducing a novel multi-scale wavelet-based generative model. It can be shown that the diffusion in wavelet domain is a dual problem to the original diffusion in the spatial domain.  For a detailed explanation of this duality, please see Appendix \ref{sec:duality}. Due to the distinct characteristics of wavelet coefficients in low- and high-frequency subbands (as detailed in Appendix \ref{sec:distribution}), we adopt different generative modeling approaches for the low- and high-frequency coefficients, marking a key innovation in our work. Specifically, while the low-frequency wavelet coefficients exhibit a Gaussian tendency, the high-frequency coefficients are sparse and non-Gaussian. Therefore, for the low-frequency coefficients, we employed the Brownian Bridge Diffusion Process (BBDP) \cite{li2023bbdm}, and for the high-frequency coefficients, we utilized a Generative Adversarial Network (GAN) based generative method.

\subsection{Brownian bridge diffusion process}  

We leverage BBDP to better model the conditional diffusion process and apply it to image restoration. Image restoration tasks focus on the generation of the target image $\vx_0\in \mathbb{R}^{H\times W\times C}$ from a conditional image $\vy\in \mathbb{R}^{H\times W\times C}$. Most existing DMs tackle the conditional image $\vy$ as an additional input argument to $\bm{\epsilon}_{\theta}$, without integrating the conditional probability distribution on $\vy$ into the diffusion theory. Different from the standard diffusion process, we adapt the Brownian bridge process and derive a conditional diffusion process for image-to-image translation, termed as BBDP.
\begin{definition}[Forward Brownian bridge process]
    The forward Brownian bridge with initial state $\vx_0$ and terminal state $\vy$ is defined as
    \begin{equation}
        \label{bb}
        q(\vx_t | \vx_0, \vx_T=\vy) = \mathscr{N} \Big( (1-\frac{t}{T}) \vx_0 + \frac{t}{T}\vy, \frac{t(T-t)}{T^2}\mI \Big),
    \end{equation}
\end{definition}
By denoting $m_t=\frac{t}{T}$ and $\delta_t = \frac{t(T-t)}{T^2}$, we can reparameterize the distribution of $\vx_t$ as:
\begin{equation}
    \label{reparam}
    \vx_t = \vx_0 + m_t(\vy - \vx_0) + \sqrt{\delta_t}\bm{\epsilon}_t,\\
    \bm{\epsilon}_t \sim N(\bm{0},\mI)
\end{equation}
Following existing works on diffusion models, a network is trained to estimate $\vx_0$ from $\vx_t, \vy$; in other words, a network $\bm{\epsilon}_\theta$ is trained to estimate $m_t(\vy - \vx_0) + \sqrt{\delta_t}\bm{\epsilon}_t$. 

    The loss function of $\bm{\epsilon}_\theta$ is defined as:
    \begin{equation}
        \label{loss}
        L=\Sigma_{t} \gamma_t \mathbb{E}_{(\vx_0,\vy),\bm{\epsilon}_t} \lVert m_t(\vy - \vx_0) + \sqrt{\delta_t}\bm{\epsilon}_t - \bm{\epsilon}_\theta(\vx_t, \vy, t) \rVert_2^2,
    \end{equation}
    where $\gamma_t$ is the weight for each $t$.

\begin{theorem}
    The reverse process can be shown to be a Gaussian process with mean $\bm{\mu}'_t(\vx_t, \vx_0, \vy)$ and variance $\delta_t'\mI$
    \begin{gather}
    \begin{aligned}
        p(\vx_{t-1}|\vx_t,\vx_0,\vy) = \mathscr{N} (\vx_{t-1}; \bm{\mu}'_t(\vx_t, \vx_0, \vy), \delta_t'\mI),
    \end{aligned}\\
    \begin{aligned}
        \label{post_mu}
        \bm{\mu}'_t(\vx_t, \vx_0, \vy) = c_{xt}\vx_t + c_{yt}\vy - c_{\epsilon t}\bm{\epsilon}_\theta(\vx_t, \vy, t)
    \end{aligned}\\
    \delta_t' = \frac{\delta_{t|t-1}\delta_{t-1}}{\delta_t}.
    \end{gather}
\end{theorem}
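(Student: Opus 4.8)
The plan is to obtain the reverse posterior $p(\vx_{t-1}\mid\vx_t,\vx_0,\vy)$ by Bayes' rule, exploiting the fact that the Brownian bridge, conditioned on its endpoint $\vy$, is a Gauss--Markov process, so that every distribution involved is Gaussian. Concretely, I would write
\[
p(\vx_{t-1}\mid\vx_t,\vx_0,\vy)=\frac{q(\vx_t\mid\vx_{t-1},\vy)\,q(\vx_{t-1}\mid\vx_0,\vy)}{q(\vx_t\mid\vx_0,\vy)},
\]
using the Markov property $q(\vx_t\mid\vx_{t-1},\vx_0,\vy)=q(\vx_t\mid\vx_{t-1},\vy)$. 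The two factors $q(\vx_{t-1}\mid\vx_0,\vy)$ and $q(\vx_t\mid\vx_0,\vy)$ are read off directly from the forward definition \eqref{bb}, with means $(1-m_{t-1})\vx_0+m_{t-1}\vy$, $(1-m_t)\vx_0+m_t\vy$ and variances $\delta_{t-1}\mI$, $\delta_t\mI$. The only nontrivial ingredient is the one-step transition $q(\vx_t\mid\vx_{t-1},\vy)$, which I would derive by positing the linear-Gaussian form $\vx_t=a_t\vx_{t-1}+b_t\vy+\sqrt{\delta_{t\mid t-1}}\,\bm{\epsilon}$ and matching its induced first and second moments against \eqref{bb}. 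Matching the mean forces $a_t=(1-m_t)/(1-m_{t-1})$ and $b_t=m_t-a_tm_{t-1}$; matching the variance gives $\delta_{t\mid t-1}=\delta_t-a_t^2\delta_{t-1}$.

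With all three Gaussians in hand, the second step is to collect the $\vx_{t-1}$-dependence of the numerator into a single quadratic and complete the square. Since $q(\vx_t\mid\vx_{t-1},\vy)$ contributes, as a function of $\vx_{t-1}$, the precision $a_t^2/\delta_{t\mid t-1}$ and $q(\vx_{t-1}\mid\vx_0,\vy)$ contributes $1/\delta_{t-1}$, the posterior precision is their sum and hence the posterior variance is
\[
\delta_t'=\Big(\frac{a_t^2}{\delta_{t\mid t-1}}+\frac{1}{\delta_{t-1}}\Big)^{-1}=\frac{\delta_{t\mid t-1}\,\delta_{t-1}}{a_t^2\delta_{t-1}+\delta_{t\mid t-1}}.
\]
Substituting $\delta_{t\mid t-1}=\delta_t-a_t^2\delta_{t-1}$ collapses the denominator to $\delta_t$, giving exactly $\delta_t'=\delta_{t\mid t-1}\delta_{t-1}/\delta_t$ as claimed. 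The same completion of the square returns the posterior mean as an explicit affine combination $\bm{\mu}'_t=\alpha_t\vx_t+\beta_t\vx_0+\zeta_t\vy$, whose coefficients $\alpha_t,\beta_t,\zeta_t$ I would simplify in terms of $m_t,m_{t-1},\delta_t,\delta_{t-1}$.

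The final step converts this $\vx_0$-dependent mean into the network-parameterized form \eqref{post_mu}. From the reparameterization \eqref{reparam} we have $\vx_0=\vx_t-\big(m_t(\vy-\vx_0)+\sqrt{\delta_t}\,\bm{\epsilon}_t\big)$, and since $\bm{\epsilon}_\theta(\vx_t,\vy,t)$ is trained by the loss \eqref{loss} to predict precisely the bracketed quantity, substituting $\vx_0\mapsto\vx_t-\bm{\epsilon}_\theta(\vx_t,\vy,t)$ eliminates $\vx_0$ and regroups $\bm{\mu}'_t$ into $c_{xt}\vx_t+c_{yt}\vy-c_{\epsilon t}\bm{\epsilon}_\theta(\vx_t,\vy,t)$, with $c_{xt},c_{yt},c_{\epsilon t}$ emerging as closed-form functions of $m_t,m_{t-1},\delta_t,\delta_{t-1}$.

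I expect the main obstacle to be the first step: the one-step transition $q(\vx_t\mid\vx_{t-1},\vy)$ is not supplied in the excerpt (only the marginal \eqref{bb} is), so its coefficients must be pinned down from the Gauss--Markov consistency conditions, and one must check that the resulting $\delta_{t\mid t-1}$ remains positive for all intermediate $t$ so that the transition is well defined. The subsequent algebra --- completing the square and collecting coefficients --- is routine but bookkeeping-heavy, and the cleanest internal check is that the denominator $a_t^2\delta_{t-1}+\delta_{t\mid t-1}$ telescopes to $\delta_t$.
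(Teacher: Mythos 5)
Your proposal is correct and takes essentially the same route as the paper's own proof in Appendix \ref{sec:bbdm}: Bayes' rule using the Markov one-step transition $q(\vx_t\mid\vx_{t-1},\vy)$ (which the paper states directly in Eq.~\ref{bbf} and you recover by moment matching, with identical $a_t=\frac{1-m_t}{1-m_{t-1}}$ and $\delta_{t\mid t-1}=\delta_t-a_t^2\delta_{t-1}$), completion of the square yielding $\delta_t'=\frac{\delta_{t\mid t-1}\delta_{t-1}}{\delta_t}$, and elimination of $\vx_0$ via the reparameterization \ref{reparam} to obtain the $c_{xt},c_{yt},c_{\epsilon t}$ form. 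Your telescoping check $a_t^2\delta_{t-1}+\delta_{t\mid t-1}=\delta_t$ and the resulting mean coefficients agree exactly with the paper's Eqs.~\ref{pmu} and \ref{post}.
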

Please refer to Appendix \ref{sec:bbdm} regarding the expression of $c_{xt}, c_{yt}, c_{\epsilon t}$ and a detailed proof. Equation \ref{post_mu} indicates the posterior sampling process of BBDP and the training process of BBDP is detailed in Appendix Algorithm \ref{training}.

\subsection{Multi-scale conditional generative model}

Wavelet transform, with its theoretical details outlined in Appendix \ref{sup: Wavelet Transform}, is characterized by an orthogonal transform matrix $\mA\in \mathbb{R}^{N^2\times N^2}$. The wavelet transform decomposes an image $\vx\in \mathbb{R}^{N^2}$ to one low-frequency (LL) subband $\vx^1_L\in \mathbb{R}^{\frac{N^2}{4}}$ and remaining high-frequency subbands $\vx^1_H\in \mathbb{R}^{\frac{3N^2}{4}}$. 

\begin{definition}[Multi-scale wavelet decomposition of conditional image generation]
    With multi-scale wavelet transformation, we can reformulate the conditional probability distribution of $\vx_0$ on $\vy$ as
    \begin{equation}
        p(\vx_0|\vy) = \Pi_{k=1}^S p(\vx_H^k|\vx_L^k, \vy_H^k) p(\vx_L^S | \vy_L^S), 
    \end{equation}
    where $S$ denotes the maximum scale and 
    \begin{equation}
        (\vx^1_H, \vx^1_L)^T = A\vx_0, \ (\vx^{k+1}_H, \vx^{k+1}_L)^T = A\vx^k_L, \ k=1,\ldots
    \end{equation}
\end{definition}

\begin{figure}[t]
    \centering
    \vspace{-.1 cm}
    \setlength{\abovecaptionskip}{0 cm}   
    \setlength{\belowcaptionskip}{-0.3 cm}
    \includegraphics[width=1\textwidth]{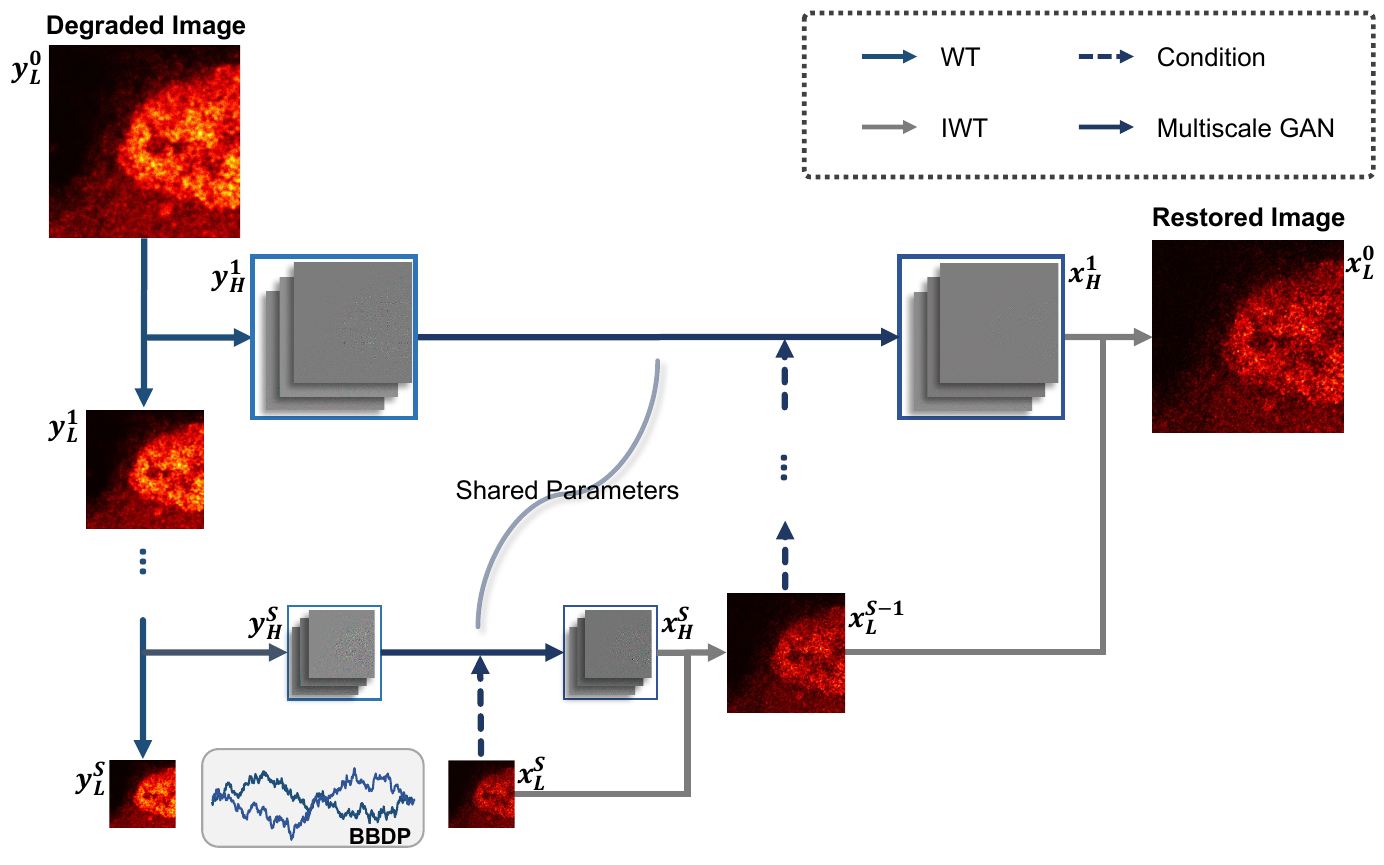}
    \caption{\textbf{Schematic diagram of MSCGM.} The conditional image $y_L^0$ is first decomposed by multi-scale wavelet transform (WT). In the coarsest wavelet layer, a BBDP transforms the low-frequency subband of conditional image to the low-frequency subband of the target image. A multi-scale GAN transforms subsequent high-frequency subbands of conditional image to the high-frequency subbands of the target image and recovers the full-resolution image using inverse wavelet transform (IWT). $y_L^i$ and $y_H^i$ represent the low- and high-frequency wavelet coefficients of the conditional image at the $i^{th}$ level of wavelet transform, respectively. Similarly, $x_L^i$ and $x_H^i$ denote the low-and high-frequency wavelet coefficients of the target image at the $i^{th}$ level of wavelet transform.}
    \label{arch}
\end{figure}

Different from existing approaches, our method leverages BBDP and GANs to handle low- and high-frequency subbands at various scales respectively, and the schematic diagram of our model is illustrated in Fig. \ref{arch}. For the coarsest level low-frequency subband $\vx^S_L$, due to the whitening effect of the low-frequency subband after wavelet transform, DMs can effectively and efficiently approximate $p(\vx_L^S|\vy_L^S)$ with fewer sampling steps, while generating diverse and photorealistic images. For another, though the conditional distribution of high-frequency subbands deviates from unimodal Gaussian distributions considerably, the multi-scale GAN is able to approximate their multi-modal distribution and sample the full-resolution images rapidly in a coarse-to-fine style. Since the BBDP at the coarsest level produces samples with good diversity and fidelity, the possibility of mode collapse commonly observed in pure GAN models can be minimalized.

We adopt the Wasserstein distances between fake and real images \cite{wgan} to optimize the generator $G$ and discriminator $D$. The training loss for multi-scale GAN is
\begin{equation}
    \mathscr{L}_G = \sum_{k=1}^S \left[ \lambda (G(\vx_L^k, \vz^k) - \vx_H^k)^2 + \nu (1 - \mathrm{SSIM}(G(\vx_L^k, \vz^k), \vx_H^k)) - \alpha D(G(\vx_L^k, \vz^k)) \right]
\end{equation}

\begin{equation}
    \mathscr{L}_D = \sum_{k=1}^S \big(D(G(\vx_L^k, \vz^k)) - D(\vx_H^k)\big)
\end{equation}
    Here $\vz^k$ refers to random white noise at scale $k$, $\mathrm{SSIM}(\cdot,\cdot)$ is the structural similarity index measure \cite{wang2004image}. 

The complete sampling process of our model is detailed in Appendix Algorithm \ref{sampling}.

\section{Experiments}
In this section, we first elucidate the design and training details of our method, as well as the preparation of training and testing datasets. Then, we evaluate our method on various image restoration tasks in computational and microscopy imaging, and compare it with baseline methods. At the end of this section, we further perform ablation study to analyze the importance of essential components in MSCGM.

\subsection{Experimental setup and implementation details}
For the BBDP at the coarsest wavelet scale, we adapt the UNet architecture with multi-head attention layers as practiced in \cite{nichol2021improved}. The number of sampling steps is set as 1000 for training. The Brownian bridge diffusion model (BBDM) baseline~\cite{li2023bbdm} is implemented at the full resolution scale without wavelet decomposition, and the same 1000 discretization step is used for training. The training of IR-SDE (image restoration stochastic differential equation) \cite{luo2023image} and Refusion \cite{luo2023image} follow the training setups as their original setups.

Inspired by \citet{nafnet}, the generator adopt a similar architecture to NAFNet. The generator contains 36 NAFBlocks distributed at 4 scales. A $2\times2$ convolutional layer with stride 2 doubling the channels connects adjacent scales in the encoding (downsampling) path, and a $1\times1$ convolutional layer with pixel shuffle layer connects adjacent scales in the decoding (upsampling)superre path. The number of channels in the first NAFBlock is 64. The discriminator consists of 5 convolutional blocks and 2 dense layers at the end, and each convolutional block halves the spatial dimension but doubles the number of channels. The number of channels for the first convolutional block in the discriminator is 64. The details about training and dataset are elucidated in Appendix \ref{sec:detail}.

\subsection{Evaluation metrics}
Peak Signal-to-Noise Ratio (PSNR) is commonly used to measure the quality of reconstruction in generated images, with higher values indicating better image quality. Structural Similarity Index Measure (SSIM) \cite{wang2004image} assesses the high-level quality of images by focusing on changes in structural information, luminance, and contrast. Fréchet Inception Distance (FID) score \cite{heusel2017gans} is used in generative models like GANs to compare the distribution of generated images against real ones, where lower FID values imply images more similar to real ones, indicating higher quality.

\begin{table}
\centering
\setlength{\abovecaptionskip}{.2 cm}
\setlength{\belowcaptionskip}{-.5 cm}
\begin{tabular}{c*{7}{c}}
\toprule
\centering
\multirow{2}{*}{\bfseries Methods} & \multirow{2}{1.5cm}{\bfseries Sampling Time (s)$\downarrow$} &  \multicolumn{3}{c}{\bfseries PSNR (dB)$\uparrow$}  &  \multicolumn{3}{c}{\bfseries SSIM$\uparrow$}\\
& & {\bfseries DIV2K} & {\bfseries Set5} & {\bfseries Set14} & {\bfseries DIV2K} & {\bfseries Set5} & {\bfseries Set14}\\
\midrule
IR-SDE & 19.51 & 23.54 & 26.73 & 22.71 & 0.56  & 0.72 & 0.54\\
ReFusion & 17.25 & 21.39 & 22.82 & 22.13 & 0.43  & 0.52 & 0.49\\
BBDM & 32.29 & 31.50 &  31.60 & 30.39 & 0.66  & 0.76 & 0.68\\
MSCGM & \textbf{2.28} & \textbf{31.66} & \textbf{32.33} & \textbf{30.79} & \textbf{0.72} & \textbf{0.85} & \textbf{0.71}  \\
\bottomrule
\end{tabular}
\caption{\textbf{Comparison of IR-SDE, ReFusion, BBDM and our method (MSCGM) on $4\times$ super resolution experiment.} Sample steps are set as 1000 for all methods. Metrics are calculated on $256\times 256$ center-cropped patches of DIV2K validation set, Set 5 and Set 14. Entries in bold indicate the best performance achieved among the compared methods. Sampling time is measured at the same resource cost.}
\label{tab:super-resolution}
\end{table}

\subsection{Results and comparison}

In this section, we first evaluate our method on two microscopy image restoration tasks with different samples and then on three natural image restoration tasks, encompassing various popular image restoration applications in computational imaging and computational microscopy. First, we apply our method to microscopy images, where the degradation process is complex and unknown. Given the pronounced contrast and sparsity inherent in microscopy images, it is crucial to use generative models capable of handling multi-modal distributions to adapt effectively to complex microscopy datasets. We utilize our method to perform super-resolution on microscopy images of nano-beads and HeLa cells, transforming diffraction-limited confocal images to achieve resolution beyond the optical diffraction limit and match the image quality of STED microscopy. Next, we assess the adaptability of our method to various image restorations tasks of natural images, including a \(4\times\) super-resolution task on DIV2K dataset, a shadow removal task on natural images (ISTD dataset) and on a low-light image enhancement task on natural images (LOL dataset). Through comparison against competitive methods on various testbeds, we demonstrate the superior effectiveness and versatility of our method for image restoration.

\begin{figure*}[ht]
    \vspace{-0. cm}
    \setlength{\abovecaptionskip}{-0 cm}   
    \setlength{\belowcaptionskip}{-0.1 cm}  
    \centering
    \includegraphics[width=.95\textwidth]{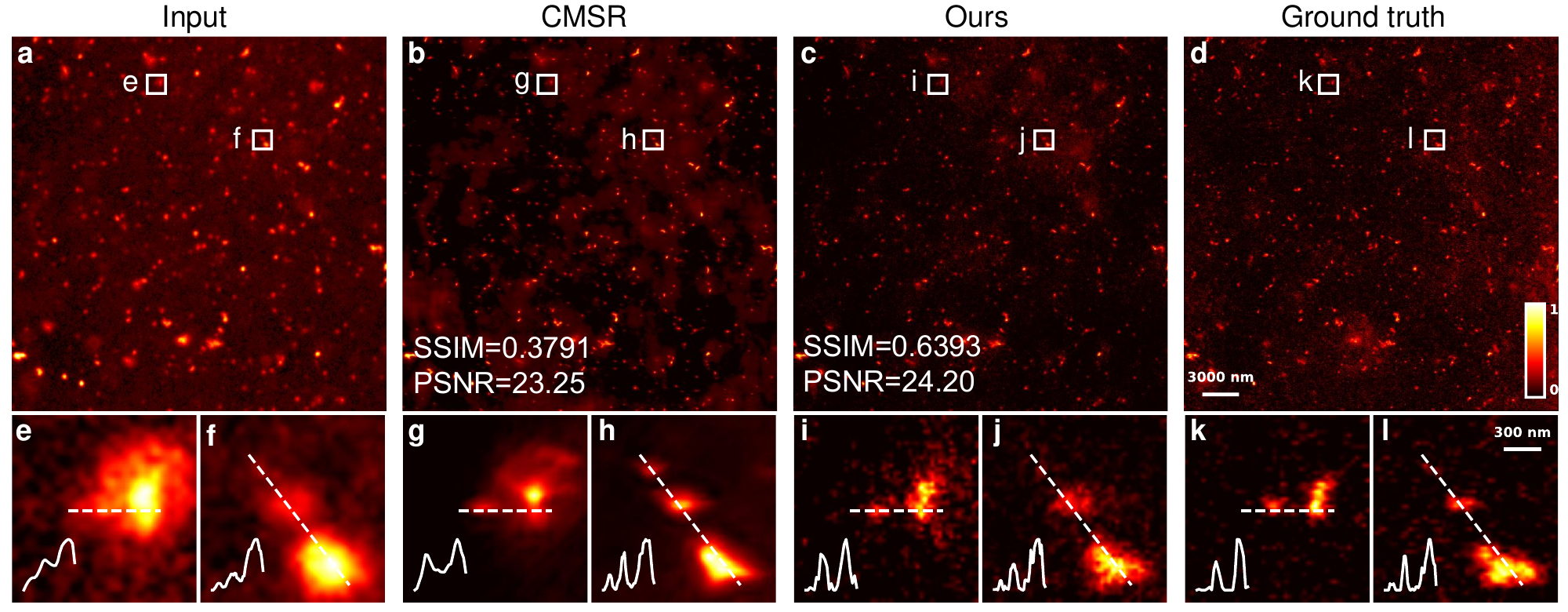}
    \caption{\textbf{Comparison of CMSR~\cite{wang2019deep} and our method on microscopy image super-resolution of nanobeads.} (a) Input images captured by a confocal microscope, (b, c) SR outputs of CMSR and our method, and (d) ground truths captured by an STED microscope of the same FOV. (e-l) Zoom-in regions marked by the corresponding white boxes in (a-c). Cross-section intensity values along the dashed line are plotted.}
    \label{nanobead}
\end{figure*}

\textbf{Microscopy Image Super-resolution:} We evaluate our method on microscopy image super-resolution tasks and compare it with existing generative models in this field. Unlike natural image super-resolution, the LR images are not downsampled but sampled at the same spatial frequency as the HR images. However, the LR images are limited by the optical diffraction limit, which is equivalent to a convolution operation on the HR images with a low-pass point spread function (PSF). We apply our method to confocal (LR) images of fluorescence nanobeads to evaluate its capability to overcome the optical diffraction limit (see Methods for sample and dataset details). Figure \ref{nanobead} illustrates one typical field-of-view (FOV) of nanobead samples (\(\sim\) 20-nm) captured using confocal and STED microscopy. The dimensions of nanobeads are considerably smaller than the optical diffraction limit (\(\sim\) 250-nm) and nearby beads cannot be distinguished in confocal images. MSCGM and a competitive cross-modality super-resolution (CMSR) model \cite{wang2019deep} are trained on the same training data and learn to transform LR confocal images to match HR STED images. As illustrated in Fig. \ref{nanobead}(a-d), our method effectively reconstructs super-resolved nanobeads beyond the optical diffraction limit and outperforms CMSR in terms of SSIM and PSNR. Moreover, as shown in Fig. \ref{nanobead}(e-l), our method reconstructs the nanobeads with smaller diameters, better contrast, and more accurate intensity values, matching well with the ground truth STED images.

As another demonstration, we apply our method to fluorescence imaging of HeLa cells and present Fig. \ref{hela}, Appendix Fig. \ref{fig:supp_hela}, \ref{fig:supp_nano} to further support its success in microscopy image restoration.

\begin{figure}[H]
    \vspace{-0.1 cm}
    \setlength{\abovecaptionskip}{-0 cm}   
    \setlength{\belowcaptionskip}{-0.1 cm}   
    \centering
    \includegraphics[width=1\columnwidth]{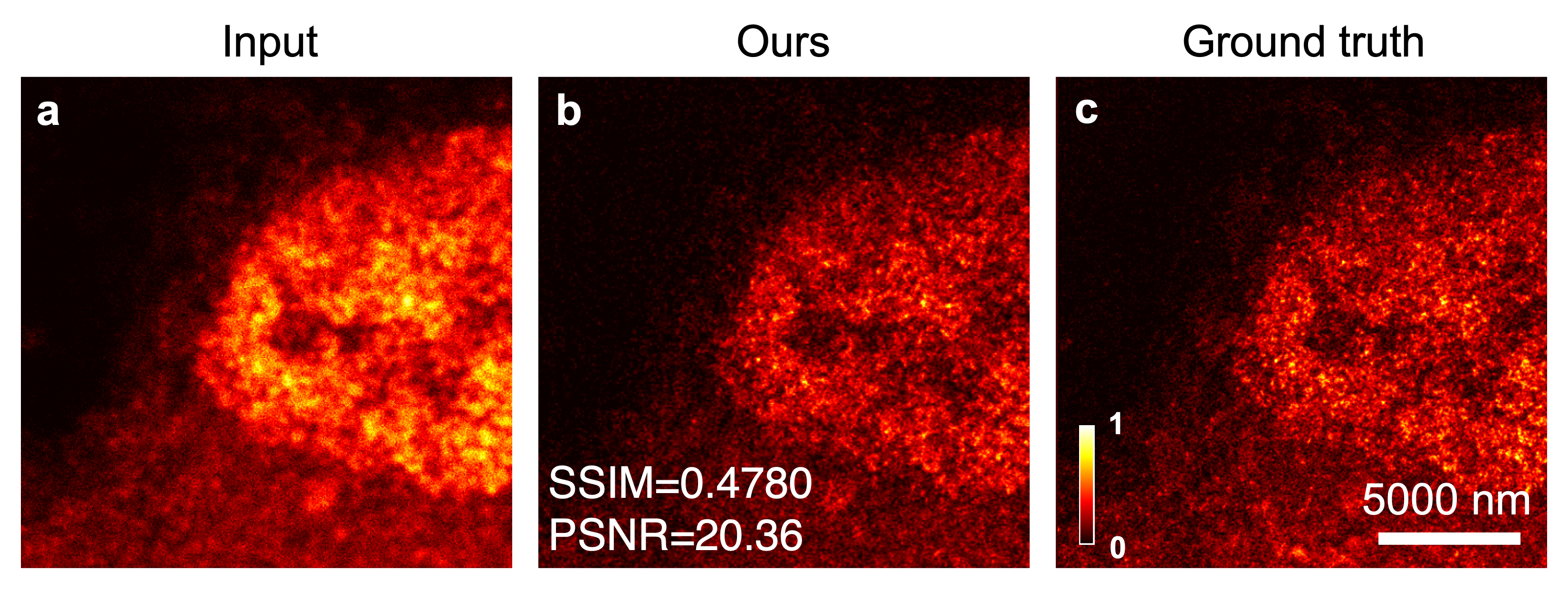}
    \caption{\textbf{Microscopy image super-resolution of our method on HeLa cells.} (a) LR confocal image, (b) SR output image and (c) HR STED image of the same FOV.}
    \label{hela}
\end{figure}

\textbf{Natural Image Super-resolution:} We train the three methods on the DIV2K training dataset and then test them on the DIV2K validation set, Set5, and Set14. Table \ref{tab:super-resolution} quantifies the super-resolution performance in terms of PSNR and SSIM on the three test sets. The reported MSCGM scores indicate better restoration quality over competitive methods. Table \ref{tab:super-resolution} also presents the sampling time of each method under the same resource cost (GPU memory). Our method (MSCGM) achieves superior image generation results and exhibits a 10x improvement in average single-image generation time compared to IR-SDE, an 8x improvement compared to ReFusion, and a 16x improvement compared to BBDM. Additional visualization results of the three methods on the DIV2K validation set are displayed in Appendix Fig. \ref{fig:super-resolution}, showing better reconstruction fidelity and perceptual quality. 

\newcolumntype{C}{>{\centering\arraybackslash}X}

\begin{table}[ht]
\centering
\setlength{\belowcaptionskip}{-.2 cm}
    \begin{tabularx}{\linewidth}{CCC}
        \toprule
        {\bfseries Methods} &  {\bfseries PSNR (dB)$\uparrow$} & {\bfseries SSIM$\uparrow$}  \\
        \midrule
         DC-ShadowNet~\cite{jin2023dcshadownet}& 26.38 & 0.922\\
         ST-CGAN~\cite{wang2018stacked} & 27.44 & 0.929\\
         DSC~\cite{hu2019direction} & 30.64 & 0.843\\
         DHAN\cite{cun2020towards} & 27.88 & 0.921   \\
         BMNet~\cite{zhu2022bijective}& 30.28 & 0.927\\
         ShadowFormer~\cite{guo2023shadowformer} &30.47 & \textbf{0.935}\\
         BBDM~\cite{li2023bbdm} & 30.54 & 0.910   \\
         MSCGM(Ours) & \textbf{31.08} & 0.915  \\
         \bottomrule
    \end{tabularx}
    \caption{\textbf{Quantitative evaluation metrics of our method (MSCGM) and competitive methods on ISTD dataset.} Metrics calculated on full-resolution images.} 
    \label{tab:shadow}
\end{table}

Moreover, we implement fast sampling with fewer sampling steps on the same super-resolution tasks, from 4 to 1000 steps, and depict the results in Appendix Fig. \ref{fig:fastsample}. Additional comparative samples, showcasing the performance of our model against BBDM at fewer sampling steps, are provided in Appendix Figs. \ref{fig:4-1000_1} and \ref{fig:4-1000_2}.

\textbf{Natural Image Shadow Removal:} For image shadow removal task, we train our paradigm on the ISTD training dataset \cite{istd} of 1330 image triplets (shadow image, mask and clean image) and evaluate it on the ISTD test set of 540 triplets. Table \ref{tab:shadow} summarizes the performance of our method against competitive methods, including DC-ShadowNet \cite{jin2023dcshadownet}, ST-CGAN \cite{wang2018stacked}, DSC \cite{hu2019direction}, DHAN \cite{cun2020towards}, BMNet \cite{zhu2022bijective}, ShadowFormer \cite{guo2023shadowformer} and BBDM \cite{li2023bbdm}. Appendix Fig. \ref{fig:shadow} further show the visualization results of our method and competitive methods. In summary, our method effectively restores high-quality images from shadowed and low-light images while minimizing the artifacts and inconsistency in the output images.

\textbf{Low-light Natural Image Enhancement:} In this task, we train and evaluate our paradigm on LOL dataset \cite{Chen2018lol}, which contains 485 training and 15 testing image pairs. Figure \ref{fig:img_enhance} qualitatively showcases the results of our method in comparison with RetinexFormer \cite{cai2023retinexformer} and BBDM. Our method demonstrates better image enhancement performance with better color accuracy and less noisy details. This effectiveness of our method on image enhancement is further confirmed by the quantitative evaluation results against RetinexFormer, HWMNet \cite{fan2022half} and BBDM, as shown in Appendix Table \ref{tab:enhance}.

\textbf{Ablation Study:} We first perform an ablation study to examine the effectiveness of the multi-scale generation of our method. For this purpose, we train a NAFNet-based GAN without the multi-scale wavelet decomposition on natural image super-resolution and shadow removal tasks and compare it with our method. It is important to note that the BBDP at the coarse level is inherent to the multi-scale generative mechanism and the NAFNet model performs image restoration in a single inference. As summarized in Table \ref{tab:ablation}, our method achieves better reconstruction over the NAFNet model, confirming the effectiveness of the multi-scale generation and BBDP at the coarsest level, which provides our model with high sample diversity comparable to BBDM.

We then perform an ablation study to examine the effectiveness of the multi-scale generation of our method.

\begin{table}[ht]
\centering
\renewcommand{\arraystretch}{1.2}
\setlength{\abovecaptionskip}{0.2 cm}
\setlength{\belowcaptionskip}{-.4 cm}
    \begin{tabularx}{\linewidth}{CCCCC} 
      \toprule
      \multirow{2}{*}{\bfseries Method} & \multicolumn{2}{c}{\bfseries $4\times$ Super-resolution} & \multicolumn{2}{c}{\bfseries Shadow Removal} \\
      \cmidrule(lr){2-5}
      & FID & PSNR & FID & PSNR \\
      \midrule
      MSCGM & 148.33 & 31.66 & 47.19 & \textbf{31.08} \\
      BBDM & \textbf{143.73} & 31.50 & \textbf{45.22} & 30.54 \\
      NAFNet & 162.93 & \textbf{31.73} & 54.65 & 30.77\\
      \bottomrule
    \end{tabularx}
    \captionof{table}{\textbf{Abalation study among our method (MSCGM), BBDM and NAFNet-based GAN.} Metrics were evaluated on DIV2K validation set and ISTD test set respectively. Sampling steps were set as 1000 for MSCGM and BBDM baseline.}
    \label{tab:ablation}
\end{table}
\section{Conclusion}

To address the limitations of exsiting diffusion models in conditional image restoration, we demonstrate a novel generative model for image restoration based on Brownian bridge process and multi-scale wavelet transform. By factorizing the image restoration process in the multi-scale wavelet domains, we utilize Brownian bridge diffusion process and generative adversarial networks to recover different wavelet subbands according to their distribution properties, consequently accelerate the sampling speed significantly and achieve high sample quality and diversity competitive to diffusion model baselines.

\textbf{Limitations: }In this work, while we mainly present multi-scale wavelet transformation using the common Harr bases, the future work could investigate designing more efficient wavelet or learned bases. Besides, the adversarial training employed in this study can be unstable and sensitive to training setups. In summary, our method provides a practical solution to issues with generative learning and facilitates the applications of advanced generative models on computational and microscopy imaging.
\clearpage

\bibliography{neurips_2024}
\bibliographystyle{plainnat}



\clearpage

\appendix



\section{Code and data availability}
The codes of our reported method is available at \url{https://anonymous.4open.science/r/MSCGM-E114}. The DIV2K dataset~\cite{div2k} is obtained from \url{https://data.vision.ee.ethz.ch/cvl/DIV2K/}, and the ISTD dataset~\cite{istd} is obtained from \url{https://github.com/DeepInsight-PCALab/ST-CGAN}. The microscopy image datasets (nanobeads and HeLa cells) are requested from Wang et al.~\cite{wang2019deep} and partial demo images are uploaded to \url{https://anonymous.4open.science/r/MSCGM-E114}.

\section{Score regularity for discretization}
\label{sec:score_reg}

\begin{theorem} 
\label{theorem1}
Suppose the Gaussian distribution $p = \mathcal{N}(0, \Sigma)$ and distribution ${\tilde{p}}_0$ from time reversed SDE, the Kullback-Leibler divergence between $p$ and $p_{\tilde{0}}$ relates to the covariance matrix $\Sigma$ as:
$KL(p \parallel {\tilde{p}}_0) \leq \Psi_T + \Psi_{\Delta t} + \Psi_{T,\Delta t}$, with:
\begin{align}
\label{eq:error}
    \Psi_T &= f\left(e^{-4T} \left| \mathrm{Tr}\left((\Sigma - \mathrm{Id})\Sigma\right)\right|\right), \\
    \Psi_{\Delta t} &= f\left(\Delta t \left|\mathrm{Tr}\left(\Sigma^{-1} - \Sigma(\Sigma - \mathrm{Id})^{-1}\log(\Sigma)/2 + (\mathrm{Id} - \Sigma^{-1})/3\right)\right|\right), \\
    \Psi_{T,{\Delta t}} &= o({\Delta t} + e^{-4T}), \qquad   {\Delta t} \rightarrow 0, T \rightarrow +\infty
\end{align}
where $f(t) = t - \log(1 + t)$ and $d$ is the dimension of \(\Sigma\), $\mathrm{Tr}\left(\Sigma\right)=d$. 
\end{theorem}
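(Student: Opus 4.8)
The plan is to exploit the fact that Gaussianity is preserved at every stage of the construction. The forward Ornstein--Uhlenbeck dynamics keep each marginal centered Gaussian, with covariance $\Sigma_t = e^{-2t}\Sigma + (1-e^{-2t})\mathrm{Id}$; the exact score of a Gaussian is the linear map $x \mapsto -\Sigma_t^{-1}x$; and one Euler--Maruyama step of the reverse SDE is an affine Gaussian update. Hence the sampled law $\tilde{p}_0$ is itself a centered Gaussian $\mathcal{N}(0,\tilde{\Sigma})$, and the whole problem collapses to computing $\tilde{\Sigma}$ and substituting it into the closed-form divergence $\mathrm{KL}(\mathcal{N}(0,\Sigma)\parallel\mathcal{N}(0,\tilde{\Sigma})) = \tfrac12[\mathrm{Tr}(\tilde{\Sigma}^{-1}\Sigma) - d + \log\det\tilde{\Sigma} - \log\det\Sigma]$. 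Since $\Sigma_t$, the scores $\Sigma_t^{-1}$, and therefore $\tilde{\Sigma}$ are all functions of $\Sigma$, they share an eigenbasis, so I would diagonalize and track a single eigenvalue $\sigma^2$ of $\Sigma$ through the scalar variance recursion of the discretized reverse chain.

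In this scalar reduction the divergence factorizes as $\tfrac12\sum_i f\big(\sigma_i^2/\tilde{\sigma}_i^2 - 1\big)$ with $f(t)=t-\log(1+t)$: this is precisely where $f$ enters, as the per-mode KL between two centered one-dimensional Gaussians. The next step is to write the per-mode variance defect $\sigma_i^2/\tilde{\sigma}_i^2 - 1$ as a sum of three contributions and match them, via the monotonicity of $f$ and the bound $f(t)\le t$ for $t\ge 0$, to $\Psi_T$, $\Psi_{\Delta t}$, and $\Psi_{T,\Delta t}$. The truncation term arises from initializing the reverse process at $\mathcal{N}(0,\mathrm{Id})$ rather than at the true $p_T$, whose variance differs by $e^{-2T}(\sigma^2-1)$; propagating this perturbation through the contractive reverse dynamics and collecting it across modes produces the factor $e^{-4T}\,\mathrm{Tr}\big((\Sigma-\mathrm{Id})\Sigma\big)$, where the squaring of $e^{-2T}$ accounts for the $e^{-4T}$ rate.

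The discretization term requires accumulating the Euler--Maruyama local errors along the reverse trajectory: I would Taylor-expand the one-step affine variance update to first order in $\Delta t$, sum the local defects, and pass the Riemann sum to an integral over $[0,T]$. Exact evaluation of the time integrals of rational functions of $e^{-2t}$ is what produces the $\log(\Sigma)$ factor and the rational coefficients $\Sigma(\Sigma-\mathrm{Id})^{-1}$ and $(\mathrm{Id}-\Sigma^{-1})/3$ appearing in $\Psi_{\Delta t}$, while every contribution of order $\Delta t\,e^{-2T}$ or higher is swept into the remainder $\Psi_{T,\Delta t}=o(\Delta t + e^{-4T})$.

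The main obstacle is pinning down the exact coefficient $\Sigma^{-1} - \Sigma(\Sigma-\mathrm{Id})^{-1}\log(\Sigma)/2 + (\mathrm{Id}-\Sigma^{-1})/3$ of the discretization term. This demands a careful second-order expansion of the exponential solution of the reverse variance ODE, correct bookkeeping of the Euler--Maruyama local truncation structure (the source of the constants $1/2$ and $1/3$), and exact integration of the resulting $e^{-2t}$-rational integrands. The delicate point is keeping the cross-terms uniformly controlled so that they genuinely collect into $\Psi_{T,\Delta t}$ and do not contaminate the leading coefficient of $\Psi_{\Delta t}$; this is where uniform bounds on $\Sigma_t^{-1}$ over $t\in[0,T]$, controlled by the condition number $\kappa$ of $\Sigma$, become essential.
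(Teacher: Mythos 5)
Your proposal takes essentially the same route as the paper's proof, which is outsourced to \cite{guth2022wavelet} and outlined via Theorem \ref{theorem_sp}: because the forward OU marginals, the exact linear score \(x \mapsto -\Sigma_t^{-1}x\), and the Euler--Maruyama reverse steps are all affine and co-diagonalizable with \(\Sigma\), the sampled law is an explicit Gaussian whose covariance admits exactly the expansion \(\Sigma^{\widehat{N}} = \Sigma + e^{-4T}\Sigma^{\widehat{T}} + \Delta t\, \Psi^{\widehat{T}} + (\Delta t)^2 R^{\widehat{T},\Delta t}\) that your scalar eigenvalue recursion would produce, after which the closed-form Gaussian KL written through \(f(t)=t-\log(1+t)\) yields the three terms. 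Your attribution of the error sources — the \(e^{-4T}\) rate from the \(\mathcal{N}(0,\mathrm{Id})\) initialization defect \(e^{-2T}(\Sigma-\mathrm{Id})\) squared by the contractive reverse propagator, and the \(\Delta t\) coefficient with its \(\log(\Sigma)\) and rational factors from exactly integrating the accumulated one-step truncation errors — matches the structure of the cited argument, with the cross-terms likewise swept into \(\Psi_{T,\Delta t} = o(\Delta t + e^{-4T})\).
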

\vspace{2mm}

\begin{proposition}
For any $\epsilon > 0$, there exists $T, {\Delta t} \geq 0$ such that:
\begin{align}
(1/d)(\Psi_T + \Psi_{\Delta t}) \leq \epsilon, \\
T/{\Delta t} \leq C\epsilon^{-2}\kappa^3, 
\end{align}
where $C \geq 0$ is a universal constant, and $\kappa$ is the condition number of $\Sigma$.
\end{proposition}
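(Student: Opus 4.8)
The plan is to reduce everything to the spectrum of $\Sigma$ and exploit the normalization $\mathrm{Tr}(\Sigma) = d$. Writing $\lambda_1,\dots,\lambda_d$ for the eigenvalues, the constraint forces the average eigenvalue to equal $1$, so $\lambda_{\min} \le 1 \le \lambda_{\max}$; combined with $\kappa = \lambda_{\max}/\lambda_{\min}$ this yields the two-sided spectral bound $1/\kappa \le \lambda_{\min} \le 1 \le \lambda_{\max} \le \kappa$. These inequalities are the only structural facts about $\Sigma$ I will need, and they let me control both trace quantities appearing inside $f$ purely in terms of $d$ and $\kappa$.

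First I would bound the argument of $f$ in $\Psi_T$. Since $\mathrm{Tr}((\Sigma - \mathrm{Id})\Sigma) = \sum_i \lambda_i^2 - d$, Cauchy--Schwarz gives $\sum_i \lambda_i^2 \ge d$ while $\sum_i \lambda_i^2 \le \lambda_{\max}\sum_i \lambda_i \le \kappa d$, so $0 \le \mathrm{Tr}((\Sigma-\mathrm{Id})\Sigma) \le (\kappa-1)d = O(\kappa d)$. Next I would bound the argument of $f$ in $\Psi_{\Delta t}$ by diagonalizing: the matrix is a spectral function of $\Sigma$ with eigenvalues $g(\lambda_i)$, where $g(\lambda) = \lambda^{-1} - \tfrac12 \lambda\log(\lambda)/(\lambda-1) + \tfrac13(1-\lambda^{-1})$. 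The dominant contribution is $\mathrm{Tr}(\Sigma^{-1}) = \sum_i \lambda_i^{-1} \le d/\lambda_{\min} \le \kappa d$; the middle term has a removable singularity at $\lambda = 1$ and grows only like $\log\lambda = O(\log\kappa)$, and the last term is $O(1)$ per mode, so altogether this trace is again $O(\kappa d)$.

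With both arguments of the form $(\text{small factor})\cdot O(\kappa d)$, I would apply the elementary bound $f(t) \le t$ for $t \ge 0$ (immediate from $\log(1+t) \ge 0$). This gives $(1/d)\Psi_T \le C_1\kappa\, e^{-4T}$ and $(1/d)\Psi_{\Delta t} \le C_2 \kappa\, \Delta t$. To force each half-error below $\epsilon/2$ I would take $T = \tfrac14\log(2C_1\kappa/\epsilon)$ and $\Delta t = \epsilon/(2C_2\kappa)$, which makes $(1/d)(\Psi_T + \Psi_{\Delta t}) \le \epsilon$. The step count is then
\begin{equation*}
\frac{T}{\Delta t} = \frac{2C_2\kappa}{\epsilon}\cdot \tfrac14\log\!\Big(\frac{2C_1\kappa}{\epsilon}\Big) = O\!\Big(\frac{\kappa}{\epsilon}\log\frac{\kappa}{\epsilon}\Big),
\end{equation*}
and using $\log x \le x$ with $x = \kappa/\epsilon$ together with $\kappa \ge 1$ gives $T/\Delta t \le C\kappa^2/\epsilon^2 \le C\kappa^3/\epsilon^2$, which is the claimed bound (in fact with room to spare).

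The main obstacle I anticipate is not the selection of $T$ and $\Delta t$ but the careful treatment of the $\Sigma(\Sigma-\mathrm{Id})^{-1}\log(\Sigma)$ term: one must verify that its eigenvalue function extends continuously across $\lambda = 1$ and is uniformly $O(\log\kappa)$ on $[1/\kappa,\kappa]$, so that it never dominates $\mathrm{Tr}(\Sigma^{-1})$. A secondary point requiring care is the remainder $\Psi_{T,\Delta t} = o(\Delta t + e^{-4T})$: I would need to confirm that, under the chosen schedule, it is genuinely absorbed into the leading terms (dominated by $\epsilon$ up to constants), so that it does not corrupt the per-dimension error budget. Everything else is routine spectral bookkeeping.
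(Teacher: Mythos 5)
Your proof is correct, and it in fact establishes a stronger bound than the one claimed. Note first that the paper contains no proof of this proposition at all: it restates the supporting theorems and defers entirely to \cite{guth2022wavelet}, so your argument is a genuine reconstruction rather than a paraphrase of an in-text proof. The route you take — reduce to the spectrum via $\mathrm{Tr}(\Sigma)=d$ to get $1/\kappa\le\lambda_{\min}\le 1\le\lambda_{\max}\le\kappa$, bound both trace arguments by $O(\kappa d)$, apply the elementary $f(t)\le t$, then schedule $T=\tfrac14\log(2C_1\kappa/\epsilon)$ and $\Delta t=\epsilon/(2C_2\kappa)$ — is sound, and the resulting
\begin{equation*}
T/\Delta t \;=\; O\bigl(\epsilon^{-1}\kappa\log(\kappa/\epsilon)\bigr)\;\le\; C\epsilon^{-2}\kappa^{2}\;\le\; C\epsilon^{-2}\kappa^{3}
\end{equation*}
proves the statement with room to spare, as you observe; the cited $\kappa^3$ is simply loose relative to your analysis, and your final step $\log x\le x$ is exactly the kind of slack that generates the extra power. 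Your handling of the delicate middle term is also right: $\lambda\log\lambda/(\lambda-1)$ extends continuously through $\lambda=1$ (with value $1$), and writing $\lambda/(\lambda-1)=1+1/(\lambda-1)$ together with $\log\lambda\le\lambda-1$ gives the uniform bound $1+\log\kappa$ on $[1/\kappa,\kappa]$, so it never dominates $\mathrm{Tr}(\Sigma^{-1})\le\kappa d$.

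Two quibbles, neither fatal. First, your claim that $(\mathrm{Id}-\Sigma^{-1})/3$ contributes ``$O(1)$ per mode'' is wrong as stated: $\lambda^{-1}$ ranges up to $\kappa$, so this term is $O(\kappa)$ per mode; your conclusion survives because the total per-mode bound is $O(\kappa)$ regardless, dominated by the $\lambda^{-1}$ term. Second, the obstacle you anticipate about absorbing the remainder is moot for the literal statement: the proposition asks only to control $(1/d)(\Psi_T+\Psi_{\Delta t})$, and the term $\Psi_{T,\Delta t}=o(\Delta t+e^{-4T})$ appears only in the KL bound of the accompanying theorem, not here. A last pedantic point: your chosen $T$ must be nonnegative, so when $2C_1\kappa/\epsilon<1$ take $T=0$, in which case the inequalities hold trivially.
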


\cite{guth2022wavelet} provides the proof outline for Theorem \ref{theorem1}, based on the following Theorem \ref{theorem_sp},  
\begin{theorem}
\label{theorem_sp}
    Let $N \in \mathbb{N}$, $\Delta t > 0$, and $T = N\Delta t$. Then, we have that $\bar{x}_t^N \sim \mathcal{N}(\hat{\mu}_N, \Sigma^{\widehat{N}})$ with
\begin{align}
    \Sigma^{\widehat{N}} &= \Sigma + \exp(-4T)\Sigma^{\widehat{T}} + \Delta t \Psi^{\widehat{T}} + (\Delta t)^2 R^{\widehat{T},\Delta t}, \\
    \hat{\mu}_N &= \mu + \exp(-2T)\hat{\mu}_T + \Delta t e^{\widehat{T}} + \frac{(\Delta t)^2}{2} r^{T,\Delta t},
\end{align}
where $\Sigma^{\widehat{T}}, \Psi^{\widehat{T}}, R^{T,\Delta t} \in \mathbb{R}^{d \times d}$, $\hat{\mu}_T, e^{\widehat{T}}, r^{T,\Delta t} \in \mathbb{R}^d$, and $\|R^{T,\Delta t}\| + \|r^{T,\Delta t}\| \leq R$, not dependent on $T \geq 0$ and $\Delta t > 0$. We have that
\begin{align}
    \Sigma^{\widehat{T}} &= -(\Sigma - \mathrm{Id})(\Sigma\Sigma^{-1})^2, \\
    \Psi^{\widehat{T}} &= \mathrm{Id} - \frac{1}{2}\Sigma^2(\Sigma - \mathrm{Id})^{-1}\log(\Sigma) + \exp(-2T)\Psi^{\widetilde{T}}.
\end{align}
In addition, we have
\begin{align}
    \hat{\mu}_T &= -\Sigma^{-1}T \Sigma \mu, \\
    e^{\widehat{T}} &= \left\{-2\Sigma^{-1} - \frac{1}{4}\Sigma(\Sigma - \mathrm{Id})^{-1}\log(\Sigma)\right\}\mu + \exp(-2T)\widetilde{\mu}_T,
\end{align}
with $\Psi^{\widetilde{T}}, \widetilde{\mu}_T$ bounded and not dependent on $T$.
\end{theorem}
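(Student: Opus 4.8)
The plan is to exploit the fact that for a Gaussian target $p=\mathcal{N}(\mu,\Sigma)$ the score $\nabla\log p_t$ is \emph{affine} in $x$ at every forward time, so that both the exact reverse SDE and its Euler--Maruyama discretization act as affine maps and hence preserve Gaussianity. Consequently $\bar{x}_t^N$ is automatically Gaussian, and the whole theorem reduces to tracking just two objects through the discrete dynamics: the mean $\hat{\mu}_k$ and the covariance $\widehat{\Sigma}_k$. First I would write the forward Ornstein--Uhlenbeck marginals $\mathcal{N}(\mu_t,\Sigma_t)$ in closed form, with $\mu_t=e^{-2t}\mu$ and $\Sigma_t=\mathrm{Id}+e^{-4t}(\Sigma-\mathrm{Id})$, consistent with a forward drift whose stationary law is $\mathcal{N}(0,\mathrm{Id})$; the exponents $e^{-2t}$ and $e^{-4t}$ are precisely what seed the $e^{-2T}$ (mean) and $e^{-4T}$ (covariance) corrections in the statement. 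Since $\nabla\log p_t(x)=-\Sigma_t^{-1}(x-\mu_t)$ is affine, the reverse drift is affine, and the discrete scheme becomes an explicit affine recursion $(\hat{\mu}_{k+1},\widehat{\Sigma}_{k+1})=F_{\Delta t}(\hat{\mu}_k,\widehat{\Sigma}_k)$ initialised at $(0,\mathrm{Id})$ rather than at the true $(\mu_T,\Sigma_T)$.

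The key simplification I would use is that $\Sigma$ is symmetric positive definite and every matrix appearing in the recursion ($\Sigma_t$, the score coefficients, the update map) is a function of $\Sigma$, hence simultaneously diagonalisable with it. Diagonalising $\Sigma=Q\Lambda Q^{\top}$ decouples the $d\times d$ matrix problem into $d$ independent scalar problems, one per eigenvalue $\lambda$. For a fixed $\lambda$ the variance recursion is a product of $N$ scalar factors of the form $1-\Delta t\,g_t(\lambda)+O(\Delta t^2)$, and the mean recursion is a product of scalar contraction factors applied to the initial discrepancy $\hat{\mu}_0-\mu_T$. Everything then becomes explicit scalar bookkeeping that I can reassemble afterwards via functional calculus.

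The two expansions I would carry out are: (i) a Taylor expansion of these products in $\Delta t$ about the continuous-time ODE solution, isolating the linear-in-$\Delta t$ coefficient and recognising it, after reassembly, as $\Psi^{\widehat{T}}=\mathrm{Id}-\tfrac12\Sigma^2(\Sigma-\mathrm{Id})^{-1}\log(\Sigma)+\ldots$ for the covariance and as $e^{\widehat{T}}$ for the mean; and (ii) a first-order expansion in the initialisation error $\Sigma_T-\mathrm{Id}=O(e^{-4T})$ and $\mu_T=O(e^{-2T})$, propagated through the (contractive) reverse flow to produce the coefficients $\Sigma^{\widehat{T}}$ and $\hat{\mu}_T$. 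The residual pieces $\exp(-2T)\Psi^{\widetilde{T}}$ inside $\Psi^{\widehat{T}}$ and $\exp(-2T)\widetilde{\mu}_T$ inside $e^{\widehat{T}}$ then arise as the cross terms coupling the two expansions, and boundedness of $\Psi^{\widetilde{T}},\widetilde{\mu}_T$ follows from the uniform boundedness of the scalar factors. Reassembling the per-eigenvalue results into matrix form via $\Sigma=Q\Lambda Q^{\top}$ recovers every stated expression.

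The main obstacle will be step (i): controlling the accumulation of the per-step discretization error into a clean global bound with an explicit $O(\Delta t)$ leading term and remainders $R^{T,\Delta t}$, $r^{T,\Delta t}$ bounded \emph{uniformly} in both $T$ and $\Delta t$. This requires showing that the scalar one-step factors remain uniformly bounded and stable along the whole trajectory, so that the product of $N=T/\Delta t$ nearly-identical factors telescopes into a convergent expansion whose second-order tail stays $O(\Delta t^2)$ no matter how large $N$ grows. The positive-definiteness of $\Sigma$ and the contractivity of the reverse dynamics are exactly what make this uniformity hold, and getting the bookkeeping of the $\log(\Sigma)$ and $(\Sigma-\mathrm{Id})^{-1}$ factors right is where the delicate work lies.
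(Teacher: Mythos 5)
The paper contains no proof of this theorem: it is quoted verbatim from \cite{guth2022wavelet}, to which the reader is explicitly referred for the complete argument. Your sketch follows essentially the same route as that reference --- the affine Gaussian score $\nabla\log p_t(x)=-\Sigma_t^{-1}(x-\mu_t)$ makes the discretized reverse dynamics an affine recursion on $(\hat{\mu}_k,\widehat{\Sigma}_k)$ whose coefficient matrices are all functions of $\Sigma$ and hence jointly diagonalizable, reducing everything to per-eigenvalue scalar products expanded jointly in $\Delta t$ and in the initialization errors $e^{-2T}\mu$, $e^{-4T}(\Sigma-\mathrm{Id})$ --- and your correctly identified forward marginals even expose that the statement as transcribed here garbles the source's formulas (e.g.\ $(\Sigma\Sigma^{-1})^2$ and $-\Sigma^{-1}T\Sigma\mu$ should involve the forward covariance $\Sigma_T^{-1}=\bigl(\mathrm{Id}+e^{-4T}(\Sigma-\mathrm{Id})\bigr)^{-1}$ rather than these degenerate expressions), which a careful execution of your expansion would recover in their correct form.
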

\vspace{2mm}

\begin{theorem}
\label{theorem2}
Suppose that $\nabla \log p_t(x)$ is $\varphi^2$ in both $t$ and $x$ such that:
\begin{align}
    \sup_{x,t} \left\| \nabla^2 \log p_t(x) \right\| \leq K, \qquad \quad
    \left\| \partial_t \nabla \log p_t(x) \right\| \leq M e^{-\alpha t} \|x\|
\end{align}
for some $K$, $M$, $\alpha > 0$. Then, $\|p - \tilde{p}_0\|_{\text{TV}} \leq \Psi_T + \Psi_{\Delta t} + \Psi_{T,\Delta t}$, where:
\begin{align}
    \Psi_T &= \sqrt{2}e^{-T} \operatorname{KL}\left(p \parallel \mathcal{N}(0, \mathrm{Id})\right)^{1/2}  \\
    \Psi_{\Delta t} &= 6 \sqrt{\Delta t} \left[1 + \mathbb{E}_p\left(\|x\|^4\right)^{1/4}\right] \left[1 + K + M \left(1 + 1/2\alpha\right)^{1/2}\right] \\
    \Psi_{T,\Delta t} &= o\left(\sqrt{\Delta t} + e^{-T}\right) \qquad   {\Delta t} \rightarrow 0, T \rightarrow +\infty
\end{align}
\end{theorem}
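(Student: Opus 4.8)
The plan is to realize $\tilde{p}_0$ as the time-zero marginal of a discretized reverse diffusion and to bound its distance to $p$ by splitting the total error into an \emph{initialization error} (which produces $\Psi_T$) and a \emph{discretization error} (which produces $\Psi_{\Delta t}$), sweeping all genuinely higher-order remainders into $\Psi_{T,\Delta t}$. Concretely, I would start from the Ornstein--Uhlenbeck forward process whose stationary law is $\mathcal{N}(0,\mathrm{Id})$, write its exact time-reversal as an SDE whose drift is built from $\nabla\log p_t$, and use the fact that running this exact reverse SDE from the true forward marginal $p_T$ reproduces $p$ exactly. The sampler instead (i) starts from $\mathcal{N}(0,\mathrm{Id})$ rather than $p_T$ and (ii) replaces the exact drift by its value frozen on the time grid of mesh $\Delta t$. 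Inserting the intermediate law $q_0$ obtained by running the \emph{exact} reverse SDE from $\mathcal{N}(0,\mathrm{Id})$, the triangle inequality $\|p-\tilde{p}_0\|_{\mathrm{TV}}\le\|p-q_0\|_{\mathrm{TV}}+\|q_0-\tilde{p}_0\|_{\mathrm{TV}}$ cleanly separates the two sources.

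For the initialization term I would invoke the exponential ergodicity of the OU semigroup, which contracts relative entropy as $\operatorname{KL}(p_T\|\mathcal{N}(0,\mathrm{Id}))\le e^{-2T}\operatorname{KL}(p\|\mathcal{N}(0,\mathrm{Id}))$, and convert to total variation via Pinsker (in the $L^1$ normalization, $\|\mu-\nu\|_{\mathrm{TV}}\le\sqrt{2\operatorname{KL}(\mu\|\nu)}$). Since $p$ and $q_0$ arise from applying the \emph{same} exact reverse dynamics to $p_T$ and to $\mathcal{N}(0,\mathrm{Id})$ respectively, the data-processing inequality gives $\|p-q_0\|_{\mathrm{TV}}\le\|p_T-\mathcal{N}(0,\mathrm{Id})\|_{\mathrm{TV}}$, which is not amplified and yields exactly $\Psi_T=\sqrt{2}\,e^{-T}\operatorname{KL}(p\|\mathcal{N}(0,\mathrm{Id}))^{1/2}$.

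The discretization term is the heart of the argument, and I would control it by Girsanov's theorem on path space. Writing $\mathbb{Q}$ for the law of the exact reverse diffusion and $\tilde{\mathbb{Q}}$ for the law of its piecewise-constant-drift discretization (both started from $\mathcal{N}(0,\mathrm{Id})$), Girsanov expresses $\operatorname{KL}(\mathbb{Q}\|\tilde{\mathbb{Q}})$ as a time integral of the expected squared drift mismatch. That mismatch splits into a spatial part, bounded using $\sup_{x,t}\|\nabla^2\log p_t(x)\|\le K$ together with the one-step mean-squared increment $\mathbb{E}\|y_s-y_{\lfloor s\rfloor}\|^2=O(\Delta t)$, and a temporal part, bounded using $\|\partial_t\nabla\log p_t(x)\|\le Me^{-\alpha t}\|x\|$, whose $e^{-\alpha t}$ factor integrates to a $T$-independent constant proportional to $1/\alpha$ (this is where the factor $M(1+1/2\alpha)^{1/2}$ originates). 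Propagating moment control so that $\mathbb{E}\|y_s\|^2$ stays bounded (this is where $\mathbb{E}_p(\|x\|^4)^{1/4}$ enters), then applying Pinsker and the data-processing inequality once more to pass from path measures to the time-zero marginals, delivers the stated $\Psi_{\Delta t}$, with all higher-order contributions absorbed into $\Psi_{T,\Delta t}=o(\sqrt{\Delta t}+e^{-T})$.

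The main obstacle I anticipate is making the discretization bound genuinely $T$-independent rather than growing like $\sqrt{T\Delta t}$: the naive per-step accumulation of the spatial error over the $T/\Delta t$ steps must be tamed by exploiting the exponential temporal decay of the score's time-derivative and by uniform-in-time fourth-moment bounds on the reverse process, and one must also verify the Novikov-type integrability condition that legitimizes the Girsanov change of measure. Tracking these moment and decay estimates carefully, rather than the elementary triangle-inequality scaffolding, is where the real work lies.
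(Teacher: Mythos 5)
You cannot be matched against an in-paper argument here, because the paper does not actually prove this theorem: after stating it, the appendix remarks only that it ``generalizes Theorem \ref{theorem1} to non-Gaussian processes'' and defers the complete proof to \cite{guth2022wavelet}. The only machinery developed in the paper itself (Theorem \ref{theorem_sp}) is the exact Gaussian computation --- a second-order expansion in $\Delta t$ and $e^{-T}$ of the mean and covariance of the discretized reverse process --- which yields Theorem \ref{theorem1} but has no purchase on a general data distribution $p$. Your route is, as far as it goes, a faithful reconstruction of the strategy used in the cited reference: the triangle inequality through the law $q_0$ obtained by running the exact reverse SDE from $\mathcal{N}(0,\mathrm{Id})$; the data-processing inequality plus the entropy contraction $\operatorname{KL}(p_T \parallel \mathcal{N}(0,\mathrm{Id})) \leq e^{-2T}\operatorname{KL}(p \parallel \mathcal{N}(0,\mathrm{Id}))$ (from the Gaussian log-Sobolev inequality for the OU semigroup) and Pinsker, which reproduce $\Psi_T = \sqrt{2}\,e^{-T}\operatorname{KL}(p \parallel \mathcal{N}(0,\mathrm{Id}))^{1/2}$ exactly; and Girsanov on path space for the frozen-drift discretization, with the spatial mismatch controlled by $K$ times the one-step mean-squared increment and the temporal mismatch by $Me^{-\alpha t}\|x\|$, whose integrated square is indeed the origin of the $M(1+1/2\alpha)^{1/2}$ factor, with fourth moments of $p$ entering through uniform moment control of the reverse process. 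You also put your finger on the genuinely delicate point: a naive per-step accumulation gives $\operatorname{KL} \sim T\Delta t$ and hence a $\sqrt{T\Delta t}$ total-variation bound, which is \emph{not} $o(\sqrt{\Delta t}+e^{-T})$ as $T \to \infty$, so the $T$-uniformity of $\Psi_{\Delta t}$ must be extracted from the exponential decay of $\partial_t \nabla \log p_t$ together with uniform-in-time moment estimates, with all residual terms absorbed into $\Psi_{T,\Delta t}$; this, along with verifying Novikov's condition for the change of measure, is precisely where the proof in \cite{guth2022wavelet} spends its effort and what your sketch, by its own admission, leaves unexecuted. In short: your approach is correct and is the approach of the source the paper relies on; relative to the paper itself, you have supplied more of a proof than the paper does, though a complete argument still requires the moment and decay estimates you flagged.
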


Theorem \ref{theorem2} generalizes Theorem \ref{theorem1} to non-Gaussian processes. Please refer to \cite{guth2022wavelet} for the complete proof.

\section{Characteristics of high and low frequency coefficients in the wavelet domain}
\label{sec:distribution}

\subsection{Gaussian tendency of low-frequency coefficients in higher scales}
\label{sec:low_freq}

In an image, pixel intensities are represented as random variables, with adjacent pixels exhibiting correlation due to their spatial proximity. This correlation often follows a power-law decay:

\begin{equation}
C(d) = \frac{1}{(1 + \alpha d)^\beta},
\end{equation}
where \( C(d) \) is the correlation between pixels separated by distance \( d \), and \( \alpha \) and \( \beta \) characterize the rate of decay.

The wavelet transform (i.e., Haar wavelet transform), particularly its down-sampling step, increases the effective distance \( d \) among pixels, thereby reducing their original spatial correlation. This reduction is crucial for applying the generalized Central Limit Theorem \cite{rosenblatt1956central,withers1981central,ekstrom2014general,ash2000probability}, which requires that the individual variables (pixels, in this case) are not strongly correlated.

At scale \(k\) in the wavelet decomposition, the low-frequency coefficients, \(\bar{X}_k\), representing the average intensity over \(n_k\) pixels, are calculated as:

\begin{equation}
\bar{X}_k = \frac{1}{n_k}(X_1 + X_2 + \ldots + X_{n_k}),
\end{equation}
where \( n_k \) is the number of pixels in each group at scale \(k\).

As the scale increases, the effect of averaging over larger groups of pixels, combined with the reduced correlation due to down-sampling, leads to a scenario where the generalized Central Limit Theorem can be applied. Consequently, the distribution of \(\bar{X}_k\) tends towards a Gaussian distribution:

\begin{equation}
\bar{X}_k \xrightarrow{\text{d}} \mathcal{N}(\mu_k, \frac{\sigma_k^2}{n_k}),
\label{whiten}
\end{equation}

where \(\mu_k\) and \(\sigma_k^2\) are the mean and variance of the averaged intensities at scale \(k\), respectively. This Gaussian tendency becomes more pronounced at higher scales due to the combination of reduced pixel correlation and the averaging process.

\begin{figure}[h]
    \centering
    \vspace{-.0em}
    \setlength{\abovecaptionskip}{.5 em}
    \setlength{\belowcaptionskip}{-.5 em}
    \includegraphics[width=0.6\columnwidth]{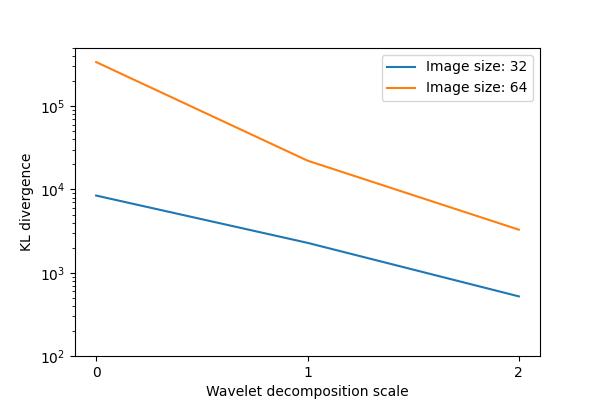}
        \caption{KL divergence between the standard normal distribution and normalized sample distribution with respect to the wavelet scale. Images were sampled from DIV2K dataset by $32\times 32$ and $64\times 64$ patches.}
    \label{KL}
\end{figure}

In Fig. \ref{KL}, we quantify the Gaussianity of low-frequency wavelet subbands at different scales using the Kullback-Leibler (KL) divergence, which measures the distance between the standard normal distribution and normalized sample distribution of the low-frequency wavelet coefficients. We sampled random patches of different resolutions ($32\times 32$ and $64\times 64$) from DIV2K dataset to calculate KL divergence. With the increasing of the scale, KL divergence decreases, validating the Gaussian tendency of low-frequency subbands after multi-scale wavelet transforms. Figure \ref{Lowfre} further validates this tendency by plotting the kurtosis of sample distribution of microscopy images of nanobeads with respect to the scale.

\subsection{Sparsity and non-Gaussianity of high-frequency coefficients}
\label{sec:high_freq}
High-frequency coefficients, when analyzed through wavelet transform, exhibit a distinct property of sparsity, characterized by a majority of wavelet coefficients being near or at zero, with only a sparse representation of significant non-zero coefficients. This sparsity highlights the efficiency of wavelet transforms in encoding signal details and abrupt changes. Furthermore, these high-frequency components often deviate from Gaussian distributions, tending towards leptokurtic distributions \cite{fraser2001explaining} with higher peaks and heavier tails. This non-Gaussian nature suggests a concentration of energy in fewer coefficients and is crucial in applications like signal denoising and compression, where recognizing and preserving these vital characteristics is paramount.

In the following proposition, we theoretically show that the conditional distribution of $\vx_H^k$ on $\vx_L^k$ exhibits highly non-Gaussian properties and yields sparse samples. For a given image $\vx$ and threshold $t$, the sparsity of its high-frequency coefficients at $k$-scale is defined as:
\begin{equation}
    s(\vx_H^k) = \frac{\lVert \textbf{1}\{\vx_H^k \leq t\} \rVert}{L^2},\ k=1,2,\ldots
\end{equation}
Here $\lVert\cdot\rVert$ is the norm counting the number of 1s in the vector.
In this way, we could estimate the expected sparsity of the true marginal distribution $p(\vx_H^k)$. Considering that the LL coefficients with approximate Gaussian distribution given the whitening effect of wavelet decomposition, we have the following proposition.
\begin{proposition}
For a sufficiently large $k$, if the expected sparsity of $\vx_H^k$ has a lower bound $\alpha$
\begin{equation}
    \mathbb{E}(s(\vx_H^k)) \geq \alpha,
\end{equation}
where $\alpha\in [0,1]$. Then the conditional expected sparsity of $\vx_H^k$ on $\vx_L^k$ is bounded by
\begin{equation}
    \mathbb{E}(s(\vx_H^k)|\vx_L^k) \geq \alpha - \varepsilon,
\end{equation}
where $\varepsilon > 0$ is a small positive number determined by $k$.
\end{proposition}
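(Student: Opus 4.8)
The plan is to control the gap between the conditional and marginal expected sparsities by quantifying how weakly $\vx_H^k$ depends on $\vx_L^k$ at large scales. First I would invoke the tower property, writing $\mathbb{E}(s(\vx_H^k)) = \mathbb{E}_{\vx_L^k}\bigl[\mathbb{E}(s(\vx_H^k)\mid\vx_L^k)\bigr]$, so that the hypothesis $\mathbb{E}(s(\vx_H^k))\geq\alpha$ is precisely the statement that the conditional sparsity averages to at least $\alpha$ over $\vx_L^k$. The remaining task is to show that the conditional expectation, viewed as a function of $\vx_L^k$, cannot dip far below this average, i.e. that $\mathbb{E}(s(\vx_H^k)\mid\vx_L^k)$ stays close to $\mathbb{E}(s(\vx_H^k))$ uniformly (or with high probability) in $\vx_L^k$.

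Since $s(\cdot)$ takes values in $[0,1]$, the key inequality I would use is that the difference of expectations of a bounded test function under two laws is controlled by their total variation distance:
\[
\bigl|\mathbb{E}(s(\vx_H^k)\mid\vx_L^k) - \mathbb{E}(s(\vx_H^k))\bigr| \;\leq\; \bigl\| p(\vx_H^k\mid\vx_L^k) - p(\vx_H^k) \bigr\|_{\mathrm{TV}}.
\]
This converts the claim into a bound on the discrepancy between the conditional and marginal laws of the high-frequency subband. I would then set $\varepsilon$ equal to a uniform bound on this total variation distance, so that the conclusion $\mathbb{E}(s(\vx_H^k)\mid\vx_L^k)\geq\alpha-\varepsilon$ follows immediately from the tower-property lower bound.

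The substance of the argument is showing that this total variation distance shrinks as $k$ grows. Here I would lean on the decorrelation mechanism established for the low-frequency coefficients in Appendix~\ref{sec:low_freq}: the Haar down-sampling enlarges the effective pixel separation $d$, so under the power-law correlation model $C(d)=(1+\alpha d)^{-\beta}$ the cross-dependence between the averaged (low-frequency) channel and the detail (high-frequency) channel decays with scale. I would translate this covariance decay into a mutual-information bound, exploiting the asymptotic Gaussianity of $\vx_L^k$ from Eq.~\ref{whiten}, and then apply Pinsker's inequality in its averaged form
\[
\mathbb{E}_{\vx_L^k}\bigl\| p(\vx_H^k\mid\vx_L^k) - p(\vx_H^k) \bigr\|_{\mathrm{TV}}^2 \;\leq\; \tfrac{1}{2}\, I(\vx_H^k;\vx_L^k),
\]
with $I(\vx_H^k;\vx_L^k)\to 0$ as $k\to\infty$ furnishing the scale-dependent $\varepsilon$.

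The main obstacle I anticipate is precisely this last step: rigorously passing from the spatial power-law correlation decay to a quantitative mutual-information (hence total-variation) bound on the joint law of the two subbands. This requires committing to a precise probabilistic model of the coefficients and handling the nonlinearity introduced by the indicator $\mathbf{1}\{\vx_H^k\leq t\}$, which is not a linear statistic of the Gaussian-tending field. A secondary subtlety is that the clean statement is an almost-sure/uniform bound in $\vx_L^k$, whereas Pinsker only controls the average over $\vx_L^k$; I would resolve this either by strengthening the conclusion to a high-probability statement via Markov's inequality applied to the per-$\vx_L^k$ total variation distance, or by establishing a uniform total-variation bound under additional regularity of the conditional density.
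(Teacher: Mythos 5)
Your route is genuinely different from the paper's, so it is worth saying first what the paper actually does: after the same tower-property step you begin with, the paper lets $\alpha'$ denote a \emph{uniform} (infimum) lower bound of $\mathbb{E}(s(\vx_H^k)\mid\vx_L^k)$ over all $\vx_L^k$, invokes the near-Gaussianity of the marginal of $\vx_L^k$ from Eq.~\ref{whiten} in the form $\int\lvert p(\vx_L^k)-f_k(\vx_L^k)\rvert\,d\vx_L^k\le\varepsilon$, splits the average $\int\mathbb{E}(s(\vx_H^k)\mid\vx_L^k)\,p(\vx_L^k)\,d\vx_L^k$ into an integral against $f_k$ plus an error term, and concludes via the identity $\alpha'=\int\mathbb{E}(s(\vx_H^k)\mid\vx_L^k)\,p(\vx_L^k)\,d\vx_L^k-\varepsilon$. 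There is no total-variation comparison between the conditional and marginal laws of $\vx_H^k$, no Pinsker, and no mutual-information argument anywhere in the paper; your plan of proving asymptotic independence of the two subbands is an entirely different mechanism.

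The genuine gap in your proposal is the one you flag yourself, and as the plan stands it is unfilled: the passage from the power-law correlation model $C(d)$ to a vanishing mutual information $I(\vx_H^k;\vx_L^k)\to 0$ is never established. Decay of pairwise covariances between subband coefficients does not bound mutual information without committing to a joint probabilistic model, and for the sparse, leptokurtic, highly non-Gaussian high-frequency coefficients the paper itself emphasizes in Appendix~\ref{sec:high_freq}, small covariance is compatible with strong dependence. Moreover, even granting that step, averaged Pinsker controls only $\mathbb{E}_{\vx_L^k}\bigl\lVert p(\cdot\mid\vx_L^k)-p(\cdot)\bigr\rVert_{\mathrm{TV}}$, so you would obtain the bound only on a high-probability set of $\vx_L^k$, not for all $\vx_L^k$ as the proposition asserts --- a weakening you correctly anticipate. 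That said, your diagnosis is sharper than the paper's own argument: the paper's final equality $\alpha'=\int\mathbb{E}(s(\vx_H^k)\mid\vx_L^k)\,p(\vx_L^k)\,d\vx_L^k-\varepsilon$ is asserted without justification, and near-Gaussianity of the \emph{marginal} law of $\vx_L^k$ places no constraint on how far the conditional expectation can dip below its average on a set of small probability --- exactly the average-versus-uniform issue you identify. So neither argument proves the uniform claim; yours, if the mutual-information step were supplied, would at least yield a rigorous high-probability version, which is likely the strongest form of the statement that can hold without additional assumptions on the joint law of the subbands.
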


\begin{proof}
    According to Eq. \ref{whiten}, for a sufficiently large $k$ we could assume that
    \begin{equation}
        \int \vert p(\vx_L^k) - f_k(\vx_L^k) \vert d\vx_L^k \leq \varepsilon,
    \end{equation}
    where $f_k(\vx_L^k)$ is the PDF of standard Gaussian distribution.
    Notice that
    \begin{align}
        \mathbb{E}(s(\vx_H^k)) &= \iint s(\vx_H^k) p(s(\vx_H^k) | \vx_L^k) p(\vx_L^k) d\vx_L^k ds\\
        &= \int \mathbb{E}(s(\vx_H^k)|\vx_L^k) p(\vx_L^k) d\vx_L^k \geq \alpha
    \end{align}
    Since $s$ is a bounded function in $[0,1]$, $\mathbb{E}(s(\vx_H^k)|\vx_L^k)$ has an uniform lower bound with respect to all $\vx_L^k$, denoted as $\alpha'$. In other words, there exists $\alpha' \in [0,1]$ such that
    \begin{equation}
        \mathbb{E}(s(\vx_H^k)|\vx_L^k) \geq \alpha',\ \forall \vx_L^k
    \end{equation}
    We can get 

    \begin{align}
    \begin{aligned}
        &\int \mathbb{E}(s(\vx_H^k)|\vx_L^k) p(\vx_L^k) d\vx_L^k \\
        = &\int \mathbb{E}(s(\vx_H^k)|\vx_L^k) f_k(\vx_L^k) d\vx_L^k \\
        &+ \int \mathbb{E}(s(\vx_H^k)|\vx_L^k) (p(\vx_L^k) - f_k(\vx_L^k)) d\vx_L^k \\
        \geq &\ \alpha'
    \end{aligned}
    \end{align}

    Similarly, it is easy to see that $1$ is a trivial uniform upper bound for $\mathbb{E}(s(\vx_H^k)|\vx_L^k)$. Thus,
    \begin{equation}
        \mathbb{E}(s(\vx_H^k)|\vx_L^k) \geq \alpha' = \int \mathbb{E}(s(\vx_H^k)|\vx_L^k) p(\vx_L^k) d\vx_L^k - \varepsilon \geq \alpha - \varepsilon.
    \end{equation}
\end{proof}

\subsection{Quantifying Non-Gaussianity of datasets}
\label{sec:nongaussian}

Third- and fourth-order sample cumulants, i.e., skewness and kurtosis, to quantify the non-Gaussianity of certain sample distributions \cite{groeneveld1984measuring}. The non-Gaussianity of high-frequency subbands can be evidenced by the kurtosis plot with respect to wavelet scales in Fig. \ref{Highfre}. The kurtosis of high-frequency subbands of microscopy images increases with the wavelet scales, showing the high non-Gaussianity of the distribution of high-frequency coefficients.

\begin{figure}[h]
    \centering
    \includegraphics[width=1\columnwidth]{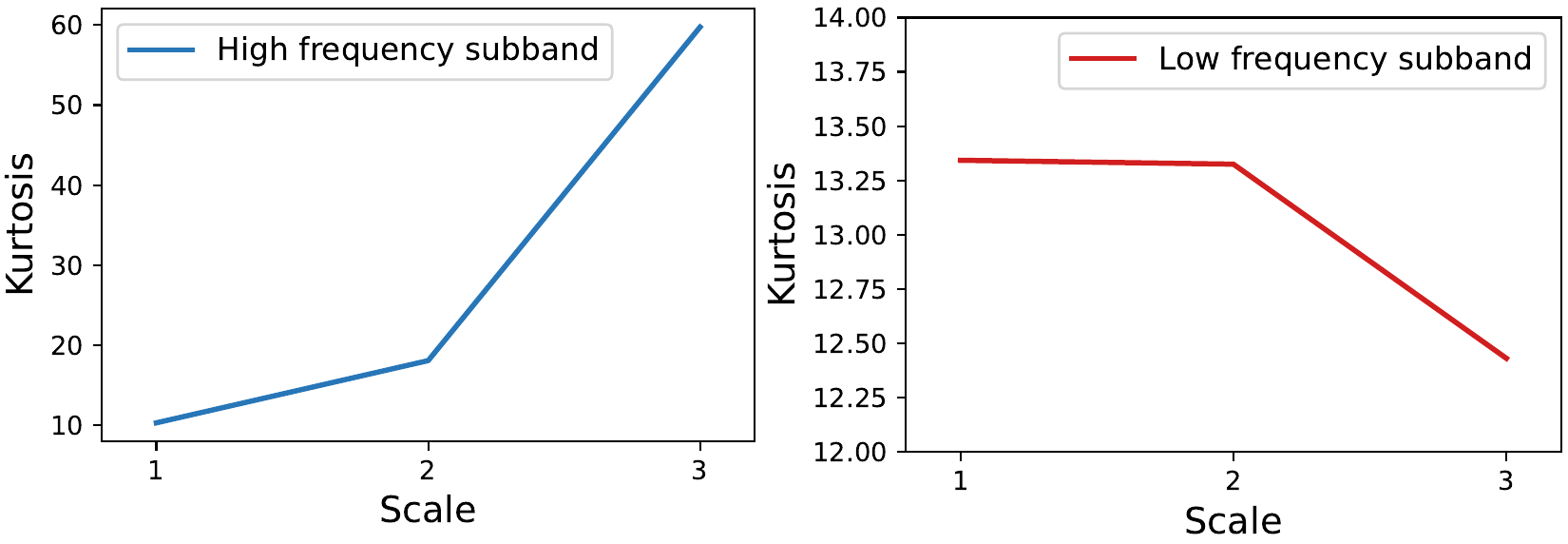}
        \caption{Kurtosis of (left) low-frequency and (right) high-frequency subband coefficients with respect to the wavelet scale. Metrics were calculated on microscopy images of nanobeads.}
    \label{Highfre}
    \label{Lowfre}
\end{figure}

In the following subsections, we first introduce the definitions of the two metrics and then evaluate and compare the non-Gaussianity of DIV2K and microscopy nanobead datasets. 

\subsubsection{Skewness ($\gamma_1$)}
Skewness is a measure of the asymmetry of the probability distribution of a real-valued random variable. It quantifies how much the distribution deviates from a normal distribution in terms of asymmetry. The skewness value can be positive, zero, negative, or undefined. In a perfectly symmetrical distribution, skewness is zero. Positive skewness indicates a distribution with an extended tail on the right side, while negative skewness shows an extended tail on the left side.

The mathematical formula for skewness is given by:
\begin{equation}
    \gamma_1 = E\left[\left(\frac{X - \mu}{\sigma}\right)^3\right]
\end{equation}
where $X$ is the random variable, $\mu$ is the mean of $X$, $\sigma$ is the standard deviation of $X$, and $E$ denotes the expected value.

The greater the absolute value of the skewness, the higher the degree of non-Gaussianity in the distribution.

\subsubsection{Kurtosis ($\beta_2$)}
Kurtosis is a measure of the "tailedness" of the probability distribution of a real-valued random variable. It provides insights into the shape of the distribution's tails and peak. High kurtosis in a data set suggests a distribution with heavy tails and a sharper peak (leptokurtic), while low kurtosis indicates a distribution with lighter tails and a more flattened peak (platykurtic). Kurtosis is often compared to the normal distribution, which has a kurtosis of 3 (excess kurtosis of 0).

The formula for kurtosis is:
\begin{equation}
    \beta_2 = E\left[\left(\frac{X - \mu}{\sigma}\right)^4\right] - 3
\end{equation}
where the variables represent the same as in the skewness formula.

These statistical measures, skewness ($\gamma_1$) and kurtosis ($\beta_2$), are crucial for quantifying and analyzing the non-Gaussianity in image data. They provide valuable insights into the distribution characteristics of image pixel intensities, particularly in highlighting deviations from the normal distribution.

The higher the kurtosis, the greater the degree of non-Gaussianity in the distribution, indicating a distribution with heavier tails than a normal distribution.

\subsubsection{Non-Gaussianity of datasets}
Here we examine the non-Gaussianity of the distribution of DIV2K training dataset and the microscopy nanobead dataset used in this work. Table \ref{nongaussianity} summarizes their skewnesses and kurtoses. We observe that microscopy images tend to have larger absolute values of skewness and kurtosis, confirming their highly non-Gaussian distribution and the high condition number for microscopy image restoration problem. As a result, standard DMs with the assumption that the distribution of target images is close to normal does not hold for microscopy images and may lead to performance degradation and excessive sampling time.

\begin{table}[h]
    \centering
    \begin{tabular}{ccc}
      \toprule
      {\bfseries Dataset} & {\bfseries Skewness} & {\bfseries Kurtosis} \\
      \midrule
      DIV2K &  0.2193 & -1.0373  \\
      Microscopy nanobeads & 2.5852 & 13.3429 \\
      \bottomrule
    \end{tabular}
    \caption{Quantitative measure of non-Gaussianity of DIV2K and microscopy image datasets.}
    \label{nongaussianity}
\end{table}

\section{Wavelet transform}
\label{sup: Wavelet Transform}

Wavelet transforms are derived from a single prototype function known as the `mother wavelet'. This function undergoes various scaling and translation processes to generate a family of wavelets. The general form of a wavelet function $\psi(x)$, derived from the mother wavelet, is expressed as:
\begin{equation}
    \psi_{a,b}(x) = \frac{1}{\sqrt{\lvert a\rvert}}\psi\left(\frac{x-b}{a} \right),
\end{equation}

where $a, b \in \mathbb{R}$, $a \neq 0, \text{ and } x \in D$, with $a$ and $b$ representing scaling and translation parameters, respectively, and $D$ being the domain of the wavelets.

\subsection{Multiresolution analysis (MRA)}

MRA is crucial in the Wavelet Transform \cite{jawerth1994overview,ranta2010wavelet}. It involves constructing a basic functional basis in a subspace $V_0$ within $L^2(\mathbb{R})$ and then expanding this basis through scaling and translation to cover the entire space $L^2(\mathbb{R})$ for multiscale analysis.

For each $k\in \mathbb{Z}$ and $k \in \mathbb{N}$, we define the scale function space as ${\bf V}_{k} = \{f\vert f$ is restricted to $(2^{-k}l, 2^{-k}(l+1))$ for all $l = 0, 1, \ldots, 2^k-1$ and vanishes elsewhere$\}$. Each space ${\bf V}_{k}$ encompasses a dimension of $2^k$, with each subspace ${\bf V}_{i}$ nested within ${\bf V}_{i+1}$:
\begin{equation}
    {\bf V}_{0}\subseteq{\bf V}_{1}\subseteq{\bf V}_{2}\subseteq\ldots\subseteq{\bf V}_{n}\subseteq\ldots .
\end{equation}
Using a base function $\varphi(x)$ in ${\bf V}_{0}$, we can span ${\bf V}_{k}$ with $2^n$ functions derived from $\varphi(x)$ through scaling and translation:
\begin{equation}
    \varphi_{l}^{k}(x) = 2^{k/2}\varphi(2^{k}x-l), \quad l=0,1,\ldots,2^k-1.
\end{equation}
These functions, $\varphi_{l}(x)$, are known as scaling functions and they project any function into the approximation space ${\bf V}_{0}$. The orthogonal complement of ${\bf V}_{k}$ in ${\bf V}_{k+1}$ is the wavelet subspace ${\bf W}_{k}$, satisfying:
\begin{equation}
    {\bf V}_{k} \oplus {\bf W}_{k} = {\bf V}_{k+1}, \quad {\bf V}_{k} \perp {\bf W}_{k}.
\label{eqn:VplusW}
\end{equation}
Thus, we can decompose ${\bf V}_{k}$ as:
\begin{equation}
    {\bf V}_{k}= {\bf V}_{0} \oplus {\bf W}_{0} \oplus {\bf W}_{1} \oplus \ldots \oplus {\bf W}_{k-1}.
\end{equation}

To form the orthogonal basis for ${\bf W}_{k}$, we construct it within $L^2(\mathbb{R})$. This basis is derived from the wavelet function $\psi(x)$, orthogonal to the scaling function $\varphi(x)$. The wavelet function is defined as:
\begin{equation}
    \psi_{l}^{k}(x) = 2^{k/2}\psi(2^{k}x-l), \quad l=0,1,\ldots,2^k-1.
\end{equation}
A key property of these wavelet functions is their orthogonality to polynomials of lower order, exemplified by the vanishing moments criterion for first-order polynomials:
\begin{equation}
    \int_{-\infty}^{\infty}x\psi_{j}(x)dx = 0.
\end{equation}
Orthogonality and vanishing moments are central to wavelets, enabling efficient data representation and feature extraction by capturing unique data characteristics without redundancy. This efficiency is particularly useful in areas like signal processing and image compression.

\subsection{Wavelet decomposition and reconstruction}
The Discrete Wavelet Transform (DWT) provides a multi-resolution analysis of signals, useful in various applications \cite{gupta2021multiwavelet,gupta2022non,xiao2023coupled}. For a discrete signal $f[n]$, DWT decomposes it into approximation coefficients $cA$ and detail coefficients $cD$ using the scaling function $\varphi(x)$ and the wavelet function $\psi(x)$, respectively:
\begin{equation}
   cA[k] = \sum_{n} h[n-2k] f[n],
\end{equation}
\begin{equation}
   cD[k] = \sum_{n} g[n-2k] f[n],
\end{equation}
where $h[n]$ and $g[n]$ are the low-pass and high-pass filters, respectively. This decomposition process can be recursively applied for deeper multi-level analysis.

Reconstruction of the signal $f'[n]$ from its wavelet coefficients uses inverse transformations, employing synthesis filters $h_0[n]$ and $g_0[n]$:
\begin{align}
   f'[n] = \sum_{k} cA[k] \cdot h_0[n-2k] + cD[k] \cdot g_0[n-2k],
\end{align}
where the synthesis filters are typically the time-reversed counterparts of the decomposition filters. In multi-level decompositions, reconstruction is a stepwise process, beginning with the coarsest approximation and progressively incorporating higher-level details until the original signal is reconstructed.

\subsection{Haar wavelet transform}
\label{sup:Haar}
The Haar wavelet \cite{stankovic2003haar}, known for its simplicity and orthogonality, is a fundamental tool in digital signal processing. Its straightforward nature makes it an ideal choice for a variety of applications, which is why we have incorporated it into our project. The discrete wavelet transform (DWT) using Haar wavelets allows for the efficient decomposition of an image into a coarse approximation of its main features and detailed components representing high-frequency aspects. This process, enhanced by multi-resolution analysis (MRA), facilitates the examination of the image at various scales, thereby uncovering more intricate details.

The mathematical representation of the Haar wavelet and its scaling function is as follows:
\begin{equation}
\begin{aligned}
\psi(t) &= 
\begin{cases} 
1 & \text{if } 0 \leq t < 0.5, \\
-1 & \text{if } 0.5 \leq t < 1, \\
0 & \text{otherwise},
\end{cases}
\end{aligned}
\end{equation}

\begin{equation}
\begin{aligned}
\phi(t) &= 
\begin{cases} 
1 & \text{if } 0 \leq t < 1, \\
0 & \text{otherwise},
\end{cases}
\end{aligned}
\end{equation}

where $\psi(t)$ denotes the Haar wavelet function and $\phi(t)$ is the corresponding scaling function.

The application of Haar wavelet transform extends to two-dimensional spaces, particularly in image processing. This extension utilizes the same decomposition approach as for one-dimensional signals but applies it to both rows and columns of the image. The filter coefficients for Haar wavelets are calculated through these inner product evaluations:
\begin{itemize}
    \item Low-pass filter coefficients (h):
    \begin{equation}
    \begin{aligned}
    h_0 &= \int_{0}^{1} \phi(t) \cdot \phi(2t) \, dt, 
    &\\ 
    h_1 &= \int_{0}^{1} \phi(t) \cdot \phi(2t-1) \, dt.
    \end{aligned}
    \end{equation}

    \item High-pass filter coefficients (g):
    \begin{equation}
    \begin{aligned}
    g_0 &= \int_{0}^{1} \psi(t) \cdot \psi(2t) \, dt, 
    &\\
    g_1 &= -\int_{0}^{1} \psi(t) \cdot \psi(2t-1) \, dt.
    \end{aligned}
    \end{equation}
\end{itemize}
To normalize these coefficients (ensuring their L2 norm is 1), we find $h_0 = \frac{1}{\sqrt{2}}$ and similar values for the other coefficients. Thus, the Haar filter coefficients are:
\begin{equation}
\begin{aligned}
h &= \left[\frac{1}{\sqrt{2}}, \frac{1}{\sqrt{2}}\right], 
&\quad 
g &= \left[\frac{1}{\sqrt{2}}, -\frac{1}{\sqrt{2}}\right].
\end{aligned}
\end{equation}

For two-dimensional DWT in image processing, these filter coefficients are matrix-operated on the image. The horizontal operation uses the outer product of the filter vector with a column vector, and the vertical operation uses the outer product with a row vector. The resulting filter matrices are:
\begin{equation}
\begin{aligned}
H &= h^T \otimes h = 
\begin{bmatrix}
\frac{1}{\sqrt{2}} & \frac{1}{\sqrt{2}} \\
\frac{1}{\sqrt{2}} & \frac{1}{\sqrt{2}}
\end{bmatrix},
&\\
G &= g^T \otimes g = 
\begin{bmatrix}
\frac{1}{\sqrt{2}} & -\frac{1}{\sqrt{2}} \\
-\frac{1}{\sqrt{2}} & \frac{1}{\sqrt{2}}
\end{bmatrix}.
\end{aligned}
\end{equation}

The matrix \(H\) corresponds to low-pass filtering (approximation), while \(G\) captures the high-frequency details. In image transformation using Haar wavelets, these matrices help derive various coefficients representing different aspects of the image, such as approximation($LL$), horizontal($LH$), vertical($HL$), and diagonal detail($HH$) components.

\section{Duality proof}
\label{sec:duality}
\subsection{Generative modeling in spatial domain}
For the Score-based Generative Model (SGM) \cite{scoresde, ddpm, song2019generative}, the forward/noising process is mathematically formulated as the Ornstein-Uhlenbeck (OU) process. The general time-rescaled OU process is expressed as:
\begin{equation}
\label{eq:OU}
d\mX_t=-g(t)^2\mX_t dt+\sqrt{2}g(t) d \mB_t.
\end{equation}
Here, ${(\mX_t)}_{t \in [0, T]}$ represents the noising process starting with $\mX_0$ sampled from the data distribution. ${(\mB_t)}_{t \in [0, T]}$ denotes a standard d-dimensional Brownian motion. The reverse process, $\mX^\leftarrow_t$, is defined such that ${(\mX^\leftarrow_t)}_{t \in [0, T]} = {(\mX_{T-t})}_{t \in [0, T]}$. Assuming $g(t)=1$ in standard diffusion models, the reverse process is:
\begin{equation}
\label{eq:reverse}
d\mX^\leftarrow_t=\left( \mX^\leftarrow_t+2\nabla \log p_{T-t}(\mX^\leftarrow_t) \right)dt+\sqrt{2}d\mB_t.
\end{equation}
Here, $p_{t}$ is the marginal density of $\mX_{t}$, and $\nabla \log p_{t}$ is the score. To revert $\mX_0$ from $\mX_T$ via the time-reversed SDE, accurate estimation of the score \( \nabla \log p_t \) at each time \( t \) is essential, alongside minimal error introduction during SDE discretization.

The reverse process approximation, as specified in Eq. \ref{eq:reverse}, involves time discretization and score approximation \( \nabla \log p_{t} \) by \( s_t \), forming a Markov chain approximation of the time-reversed SDE. The chain starts with $\tilde{\mathbf{x}}_T \sim \mathcal{N}(0, I_d)$, evolving over uniform time intervals $\Delta t$, from $t_N = T$ to $t_0 = 0$. The discretized process is detailed as:
\begin{equation}
\label{eq:uniform_discretization}
\tilde{\boldsymbol{x}}_{t-1} = \tilde{\boldsymbol{x}}_{t} + \Delta t \left( \tilde{\boldsymbol{x}}_{t} + 2 s_{t} (\tilde{\boldsymbol{x}}_{t}) \right) + \sqrt{2 \Delta t} \mathbf{z}_t,
\end{equation}
where $\mathbf{z}_t$ are instances of Brownian motion $\mB_t$.

\subsection{Generative modeling in wavelet domain}

Consider $\mathbf{X}$ as the image vector in the spatial domain. The discrete wavelet transform (DWT) \cite{shensa1992discrete,heil1989continuous} of $\mathbf{X}$ can be expressed as:
\[
\widehat{\mathbf{X}} = \mathbf{A} \mathbf{X}, \quad \mathbf{X} \in \mathbb{R}^d.
\]
Here, $\mathbf{A}$ represents the discrete wavelet transform matrix, which is orthogonal, satisfying $\mathbf{A} \mathbf{A}^\top = \mathbf{I}$. Various forms of $\mathbf{A}$ are utilized in practice, such as Haar wavelets.

In the context of score-based generative modeling, we consider the forward (or noising) process, which can be formulated by the Ornstein–Uhlenbeck (OU) process. This is mathematically described as:
\begin{equation}
\label{eq:OU_modified}
d\mathbf{X}_t = -\gamma(t)^2 \mathbf{X}_t dt + \sqrt{2}\gamma(t) d \mathbf{B}_t,
\end{equation}
where $\mathbf{B}_t$ is a standard d-dimensional Brownian motion. Upon applying DWT to $\mathbf{X}_t$, the transformed $\widehat{\mathbf{X}}_t$ also follows the OU process:
\begin{equation}
\label{eq:DWT_OU_revised}
d\widehat{\mathbf{X}}_t = -\gamma(t)^2 \widehat{\mathbf{X}}_t dt + \sqrt{2}\gamma(t) \mathbf{A} d \mathbf{B}_t, \quad \widehat{\mathbf{X}}_0 = \mathbf{A}\mathbf{X}_0.
\end{equation}

Defining $\widehat{\mathbf{B}}_t = \mathbf{A} \mathbf{B}_t$, which also behaves as a standard Brownian motion. If $\mathbf{X}_0$ is sampled from a distribution $p$, then $\widehat{\mathbf{X}}_0$ originates from the distribution 
\begin{equation}
\hat{q} = \mathcal{T}_{\mathbf{A}} \# p,
\end{equation}
where $\mathcal{T}_{\mathbf{A}}$ denotes the linear transformation operation by $\mathbf{A}$ and $\#$ represents the pushforward operation. Consequently, we have
\begin{equation}
\hat{q}(\mathbf{x}) = p(\mathbf{A}^\top \mathbf{x}).
\end{equation}

Let $p_t$ be the density distribution of $\mathbf{X}_t$ and $\hat{q}_t$ that of $\widehat{\mathbf{X}}_t$. Then,
\begin{equation}
\hat{q}_t = \mathcal{T}_{\mathbf{A}} \# p_t, \quad \hat{q}_t(\mathbf{x}) = p_t(\mathbf{A}^\top \mathbf{x}).
\end{equation}

Define the score functions for both processes as:
\begin{equation}
\mathbf{s}_t = \nabla \log p_t, \quad \mathbf{r}_t = \nabla \log \hat{q}_t. 
\end{equation}
These functions are related by:
\begin{equation}
\mathbf{r}_t(\mathbf{x}) = \frac{\nabla \hat{q}_t(\mathbf{x})}{\hat{q}_t(\mathbf{x})} = \frac{\mathbf{A} \nabla p_t(\mathbf{A}^\top \mathbf{x})}{p_t(\mathbf{A}^\top \mathbf{x})} = \mathbf{A} \mathbf{s}_t(\mathbf{A}^\top \mathbf{x}).
\end{equation}

For the reverse processes denoted as $\mathbf{X}^\leftarrow_t$ and $\widehat{\mathbf{X}}^\leftarrow_t$, assuming $\gamma(t)=1$, they are given by:
\begin{equation}
\begin{aligned}
d\mathbf{X}^\leftarrow_t &= \left( \mathbf{X}^\leftarrow_t + 2\mathbf{s}_{T-t}(\mathbf{X}^\leftarrow_t) \right)dt + \sqrt{2}d\mathbf{B}_t, \\
d\widehat{\mathbf{X}}^\leftarrow_t &= \left( \widehat{\mathbf{X}}^\leftarrow_t + 2\mathbf{r}_{T-t}(\widehat{\mathbf{X}}^\leftarrow_t) \right)dt + \sqrt{2}d\widehat{\mathbf{B}}_t.
\end{aligned}
\end{equation}

Here, $\widehat{\mathbf{B}}_t = \mathbf{A}\mathbf{B}_t$. Exploring the second SDE:
\begin{align}
d\widehat{\mathbf{X}}^\leftarrow_t &= \left( \widehat{\mathbf{X}}^\leftarrow_t + 2\mathbf{r}_{T-t}(\widehat{\mathbf{X}}^\leftarrow_t) \right)dt + \sqrt{2}d\widehat{\mathbf{B}}_t \nonumber \\
&= \left( \widehat{\mathbf{X}}^\leftarrow_t + 2\mathbf{A}\mathbf{s}_{T-t}(\mathbf{A}^\top\widehat{\mathbf{X}}^\leftarrow_t) \right)dt + \sqrt{2}d\widehat{\mathbf{B}}_t \nonumber \\
\mathbf{A}^\top d\widehat{\mathbf{X}}^\leftarrow_t &= \left( \mathbf{A}^\top\widehat{\mathbf{X}}^\leftarrow_t + 2\mathbf{s}_{T-t}(\mathbf{A}^\top\widehat{\mathbf{X}}^\leftarrow_t) \right)dt + \sqrt{2} \mathbf{A}^\top d \widehat{\mathbf{B}}_t.
\end{align}

Substituting $\mathbf{A}^\top\widehat{\mathbf{X}}^\leftarrow_t$ with $\mathbf{X}^\leftarrow_t$ brings us back to the first equation. The training processes for $\mathbf{s}\theta, \mathbf{r}{\hat{\theta}}$ with $\mathbf{X}_t^{(i)}, \widehat{\mathbf{X}}_t^{(i)}$ also follow the standard denoising score matching loss function:

\begin{align}
\mathbb{E}_t \left\{
\lambda(t) \mathbb{E}_{\mathbf{X}_0}\mathbb{E}_{\mathbf{X}_t|\mathbf{X}_0}\left[\left\| \mathbf{s}_\theta(\mathbf{X}_t,t)-\nabla_{\mathbf{X}_t}\log p_{0t}(\mathbf{X}_t|\mathbf{X}_0) \right\|^2\right]
\right\} \notag \\
\mathbb{E}_t \left\{
\hat\lambda(t) \mathbb{E}_{\widehat{\mathbf{X}}_0}\mathbb{E}_{\widehat{\mathbf{X}}_t|\widehat{\mathbf{X}}_0}\left[\left\| \mathbf{r}_{\hat{\theta}}(\widehat{\mathbf{X}}_t,t)-\nabla_{\widehat{\mathbf{X}}_t}\log \hat{q}_{0t}(\widehat{\mathbf{X}}_t|\widehat{\mathbf{X}}_0) \right\|^2\right]
\right\}.
\end{align}

The forward and reverse probability distribution functions $p_{0t}$ and $\hat{q}_{0t}$ are defined as per the standard SGM model:
\begin{align}
p_{0t}(\mathbf{X}_t|\mathbf{X}_0) &= \mathcal{N}(\mathbf{X}_t; \sqrt{\overline{\alpha}_t}\mathbf{X}_0, (1-\overline{\alpha}_t)\mathbf{I}), \notag \\
\hat{q}_{0t}(\widehat{\mathbf{X}}_t|\widehat{\mathbf{X}}_0) &= \mathcal{N}(\widehat{\mathbf{X}}_t; \sqrt{\overline{\alpha}_t}\widehat{\mathbf{X}}_0, (1-\overline{\alpha}_t)\mathbf{I}).
\end{align}

\section{Posterior distribution of Brownian bridge diffusion model}
\label{sec:bbdm}

Brownian bridge process can prove to be Markovian and hence defined alternatively by its one-step forward process. The one-step forward process $q(\vx_t|\vx_{t-1}, \vy)$ can be derived as:
\begin{equation}
\begin{aligned}
    \label{bbf}
    &q(\vx_t | \vx_{t-1}, \vy) \\
    = &N(\frac{1-m_t}{1-m_{t-1}}\vx_{t-1} + (m_t - \frac{1-m_t}{1-m_{t-1}}m_{t-1})\vy, \delta_{t|t-1}\mI).
\end{aligned}
\end{equation}
Here $\delta_{t|t-1} = \delta_t - \delta_{t-1} \frac{(1-m_t)^2}{(1-m_{t-1})^2}$ and $m_t = \frac{t}{T}$.

The reverse process can be derived by Bayesian formula and shown to be a Gaussian process with mean $\mu'_t(x_t, x_0, y)$ and variance $\delta_t'\mI$:

\begin{gather}
\begin{aligned}
    p(\vx_{t-1}|\vx_t,\vx_0,\vy) &= \frac{p(\vx_t|\vx_{t-1},\vy)p(\vx_{t-1}|\vx_0, \vy)}{p(\vx_t|\vx_0, \vy)}\\
    &= N(\vx_{t-1}; \bm{\mu}'_t(x_t, x_0, y), \delta_t'\mI),
\end{aligned}\\
\begin{aligned}
    \label{pmu}
    \bm{\mu}'_t(\vx_t, \vx_0, \vy) &= \frac{\delta_{t-1}}{\delta_t} \frac{1-m_t}{1-m_{t-1}}\vx_t + (1-m_{t-1})\frac{\delta_{t|t-1}}{\delta_t}\vx_0\\
    &+ (m_{t-1}-m_t\frac{\delta_{t-1}}{\delta_t} \frac{1-m_t}{1-m_{t-1}})\vy,
\end{aligned}\\
\delta_t' = \frac{\delta_{t|t-1}\delta_{t-1}}{\delta_t}.
\end{gather}

Utilizing the estimate of $\vx_0$ by $\bm{\epsilon}_\theta$ in Eq. \ref{reparam}, we could eliminate $\vx_0$ in Eq. \ref{pmu} and rewrite Eq. \ref{pmu} as~\cite{li2023bbdm}:
\begin{gather}
    \begin{aligned}
        \label{post}
        \bm{\mu}'_t(\vx_t, \vy) &= \frac{\delta_{t-1}}{\delta_t} \frac{1-m_t}{1-m_{t-1}}\vx_t\\
        &+ (m_{t-1}-m_t\frac{\delta_{t-1}}{\delta_t} \frac{1-m_t}{1-m_{t-1}})\vy\\
        &+ (1-m_{t-1})\frac{\delta_{t|t-1}}{\delta_t}(\vx_t - \bm{\epsilon}_\theta(\vx_t,\vy, t))\\
        &= c_{xt}\vx_t + c_{yt}\vy - c_{\epsilon t}\bm{\epsilon}_\theta(\vx_t, \vy, t),
    \end{aligned}\\
    \begin{aligned}
        c_{xt} &= \frac{\delta_{t-1}}{\delta_t} \frac{1-m_t}{1-m_{t-1}} + (1-m_{t-1})\frac{\delta_{t|t-1}}{\delta_t}\\
        c_{yt} &= m_{t-1} - m_t\frac{\delta_{t-1}}{\delta_t} \frac{1-m_t}{1-m_{t-1}}\\
        c_{\epsilon t} &= (1-m_{t-1})\frac{\delta_{t|t-1}}{\delta_t}.
    \end{aligned}
\end{gather}

\section{Training and sampling algorithms}
\label{sec:algo}
\begin{algorithm}[H]
    \caption{Training of BBDP}
    \label{training}
    \begin{algorithmic}[1]
    \REPEAT
        \STATE $\vx_0, \vy \sim q(\vx_0, \vy)$
        \STATE $t \sim \mathrm{Uniform}([1,\cdots,T])$
        \STATE $\bm{\epsilon}_t \sim \mathscr{N} (\bm{0}, \mI)$
        \STATE Take gradient descent step on\\
        $\nabla_\theta \lVert m_t(\vy - \vx_0) + \sqrt{\delta_t}\bm{\epsilon}_t - \bm{\epsilon}_\theta(x_t, y, t) \rVert_2^2$
    \UNTIL converged
    \end{algorithmic}
\end{algorithm}

\begin{algorithm}[H]
    \caption{Sampling}
    \label{sampling}
    \begin{algorithmic}[1]
        \STATE Sample $\vy \sim q(\vy)$
        \STATE Wavelet transform $S$ times to get $\{\vy_L^S, \vy_H^S, \cdots, \vy_H^1\}$
        \FOR{$t = T, T-1, \cdots, 1$}
            \STATE $\vz \sim \mathscr{N} (\bm{0}, \mI)$ if $t > 1$, else $\vz=\bm{0}$
            \STATE $\vx_{t-1,L}^S = c_{xt}\vx_{t,L}^S + c_{yt}\vy_L^S - c_{\epsilon t}\bm{\epsilon}_\theta(\vx_{t,L}^S, \vy_L^S, t) + \sqrt{\delta_t'}\vz$
        \ENDFOR\\
        \STATE $\vx_L^S = \vx_{0,L}^S$\\
        \FOR{$k = S, S-1, \cdots, 1$}
            \STATE $\vx_H^k = G(\vx_L^k, \vy_H^k, z^k)$
            \STATE $\vx_L^{k-1} = A^T (\vx_H^k, \vx_L^k)^T$
        \ENDFOR\\
        \STATE Return $\vx_0=\vx_L^0$
    \end{algorithmic}
\end{algorithm}

\section{Datasets and implementation details}
\label{sec:detail}
For super-resolution experiments on DIV2K, the LR images were bicubically downsampled to $64\times 64$ center-cropped patches and bilinearly upsampled to $256\times 256$ as the input images, and the HR images were center-cropped as $256\times 256$. For shadow removal on the ISTD dataset, we used a similar method, pairing original shadow images with their shadow-free counterparts, both consistently sized to maintain uniformity in processing.

For super-resolution experiments involving HeLa cell nuclei and nano-beads, we utilized a Leica TCS SP8 stimulated emission depletion (STED) confocal microscope, equipped with a Leica HC PL APO 100×/1.40-NA Oil STED White objective. The samples were prepared on high-performance coverslips (170\si{\um} $\pm$ 10\si{\um}, Carl Zeiss Microscopy), facilitating precise imaging. In the staining process, HeLa cell nuclei were treated with Rabbit anti-Histone H3 and Atto-647N Goat anti-rabbit IgG antibodies. The nano-beads, used for the STED experiments, were processed with methanol and ProLong Diamond antifade reagents. Both samples were excited with a 633-nm wavelength laser. Emission detection was carried out using a HyD SMD photodetector (Leica Microsystems) through a 645–752-nm bandpass filter.

For super-resolution experiments on microscopy images of nanobeads, the low-resolution (LR) images were acquired with specific settings (16 times line average and 30 times frame average for nano-beads; 8 times line average and 6 times frame average for cell nuclei), ensuring the acquisition of high-quality images necessary for our computer vision analysis. The scanning step size, i.e., the effective pixel size is ~30 nm. Here, all LR and HR images were center cropped as $256\times 256$ patches to match the setting in other experiments.

AdamW \cite{adamw} optimizers were used in all experiments with an initial learning rate of 1e-4. Models with exponential moving averaged parameters with a rate of 0.999 was saved and evaluated. The BBDM models at the coarsest wavelet scale were trained for 100000 steps with a batch size of 6. 

In our GAN training, we used a batch size of 3, with images cropped at 256x256 resolution. We employed AdamW optimizers, setting the learning rate at 1e-4 for the generator and 1e-5 for the discriminator. The training loss function is a weighted sum of diverse loss terms: L1 loss with a weight of 20.0, adversarial loss at 0.1, and structural similarity index measure (SSIM)~\cite{wang2004image} loss weighted at 0.5. The training was conducted for 200 epochs.

All models were implemented on NVIDIA V100 graphic cards using PyTorch. For speed test, a batch size of 16 was used for BBDM, IR-SDE and Refusion models, and a batch size of 64 was used for MSCGM so that they consume equivalent computational resources.

\section{Additional results}

\subsection{Visualization and fast sampling results on natural image superresolution}
\begin{figure}[H]
    \centering
    \vspace{-.1 cm}
    \setlength{\belowcaptionskip}{-0.1 cm}
    \includegraphics[width=1\textwidth]{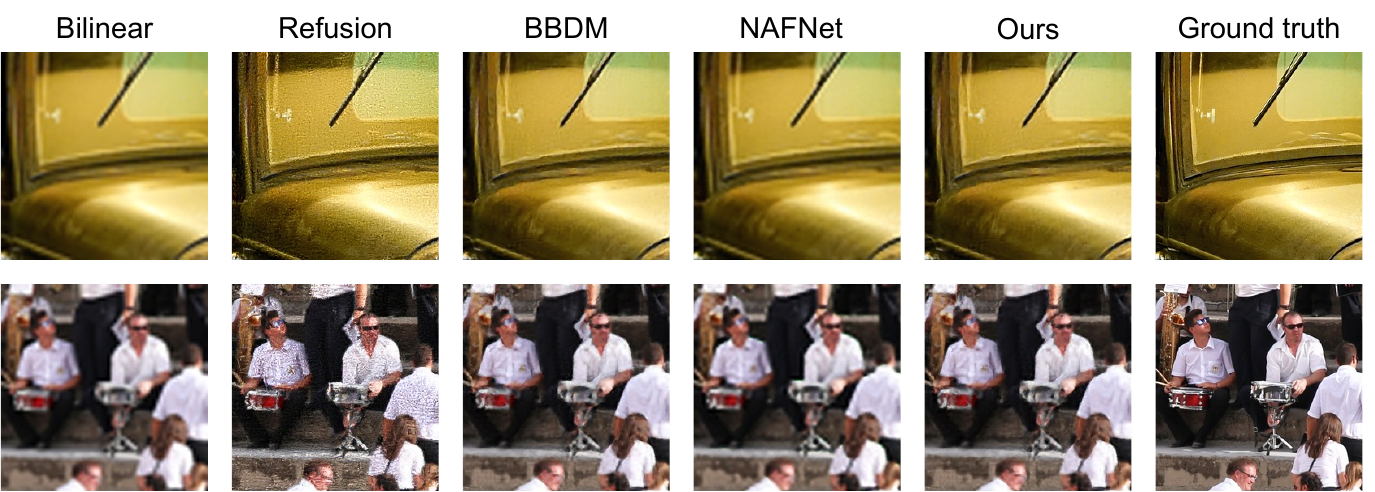}
        \caption{\textbf{Comparison of our method and baselines on $4\times$ natural image super-resolution on DIV2K validation set.} Sample steps are set as 1000 for all methods. Center crops of $256\times 256$ are shown.}
    \label{fig:super-resolution}
\end{figure}

\begin{figure}[H]
\centering
\includegraphics[width=0.6\textwidth]{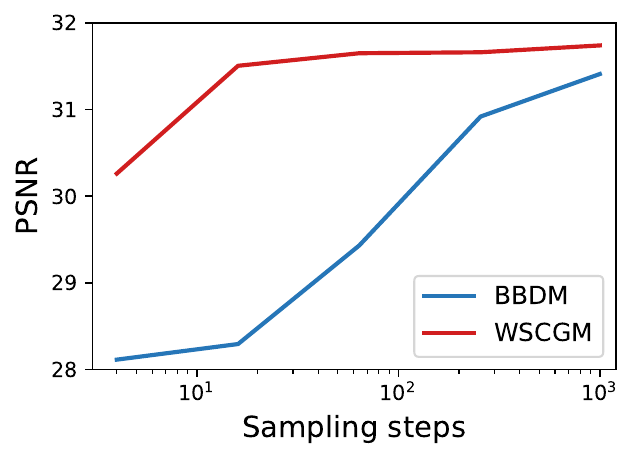}
\caption{\textbf{Fast sampling comparison between our method (MSCGM) and BBDM on $4\times$ super-resolution.} PSNR evaluated on center-cropped $256\times 256$ patches on DIV2K validation set.}
\label{fig:fastsample}
\end{figure}

\subsection{Results on image enhancement}

\begin{figure}[H]
    \vspace{-0.1 cm}
    \setlength{\abovecaptionskip}{-0 cm}   
    \setlength{\belowcaptionskip}{-0.1 cm}   
    \centering
    \includegraphics[width=1\columnwidth]{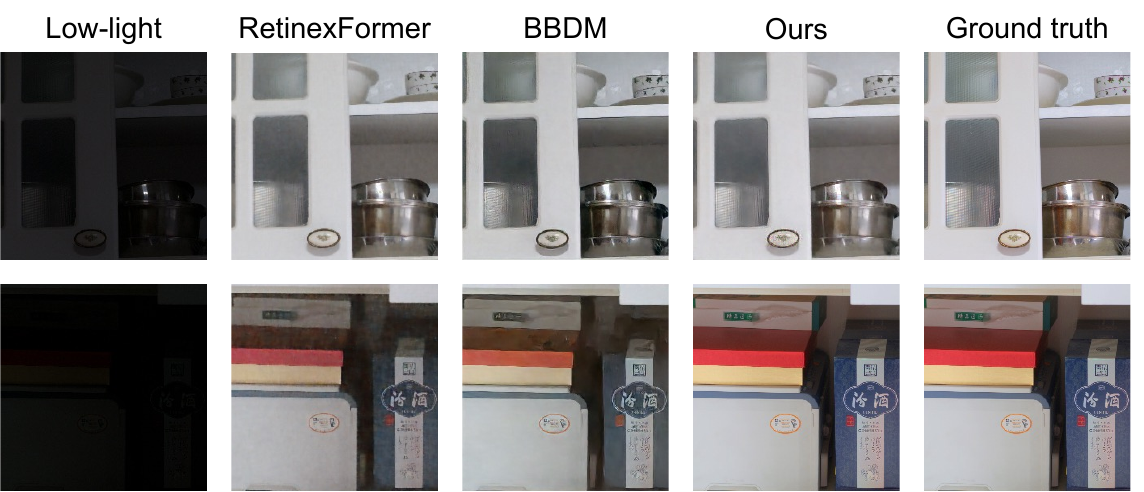}
    \caption{\textbf{Image enhancement results on LOL dataset.} Center-cropped $256\times 256$ patches shown.}
    \label{fig:img_enhance}
\end{figure}

\begin{table}[H]
    \centering
    \setlength{\abovecaptionskip}{0.1 cm}   
    \setlength{\belowcaptionskip}{-0.2 cm}  
    \begin{tabular}{c*{6}{c}}
        \toprule
        {\bfseries Methods} &  {\bfseries PSNR (dB)$\uparrow$} & {\bfseries SSIM$\uparrow$} & {\bfseries FID$\downarrow$} \\
        \midrule
         BBDM & 28.66 & 0.860 & 98.49  \\
         HWMNet & 29.40 & 0.902 & \textbf{82.30} \\
         RetinexFormer & \textbf{29.55} & \textbf{0.897} & 109.75  \\
         MSCGM & 29.27 & 0.863 & 115.96  \\
         \bottomrule
    \end{tabular}
    \caption{\textbf{Comparison of HWMNet\cite{fan2022half}, RetinexFormer\cite{cai2023retinexformer} and our method (MSCGM) on image enhancement experiment.} Metrics calculated on $256\times 256$ center-cropped images of LOL dataset. Entries in bold indicate the best performance achieved among the compared method.}
    \label{tab:enhance}
\end{table}

\subsection{Visualization results on shadow removal task}

Figure~\ref{fig:shadow} visualizes the shadow removal performance of our approach and competitive methods. Overall, our method generates output images with better consistency and milder artifacts. Refer to Table~\ref{tab:shadow} for quantitative assessments on these results.
\begin{figure}[H]
    \centering
    \hspace{-3em}
    \includegraphics[width=1\columnwidth]{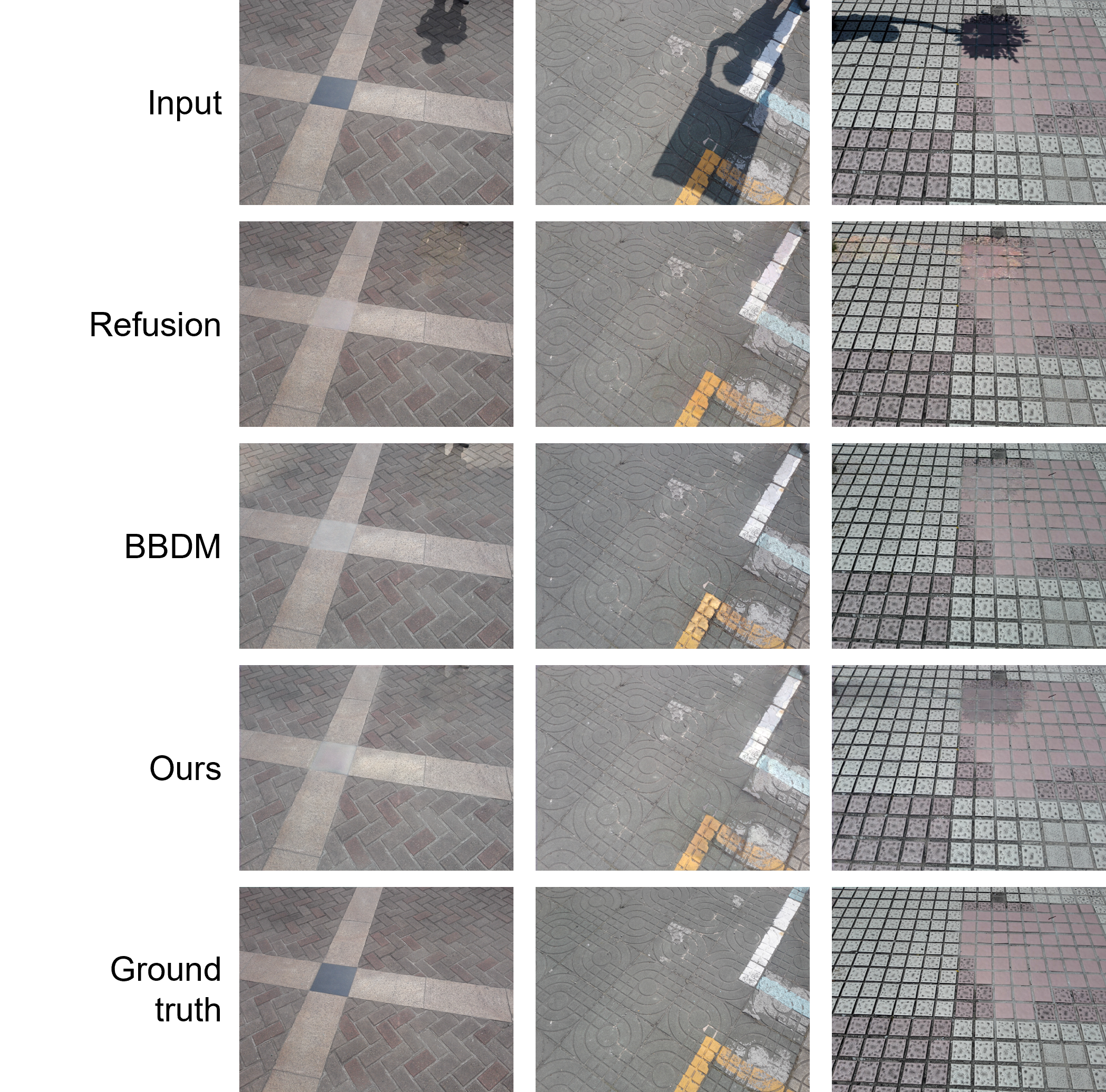}
        \caption{Comparison of our method and baselines on the natural image shadow removal task on ISTD test set. Sample steps were set as 1000 for all methods.}
    \label{fig:shadow}
\end{figure}

\subsection{Additional results on microscopy image super-resolution}

Figure~\ref{hela} and \ref{fig:supp_hela} presents additional results on microscopy image super-resolution of HeLa cells and Fig.~\ref{fig:supp_nano} showcases additional results on microscopy image super-resolution of nanobeads. In correspondence to the results shown in the main text, SR images generated by our method match with the ground truths very well.

\begin{figure}[H]
	\centering
    \includegraphics[width=1\columnwidth]{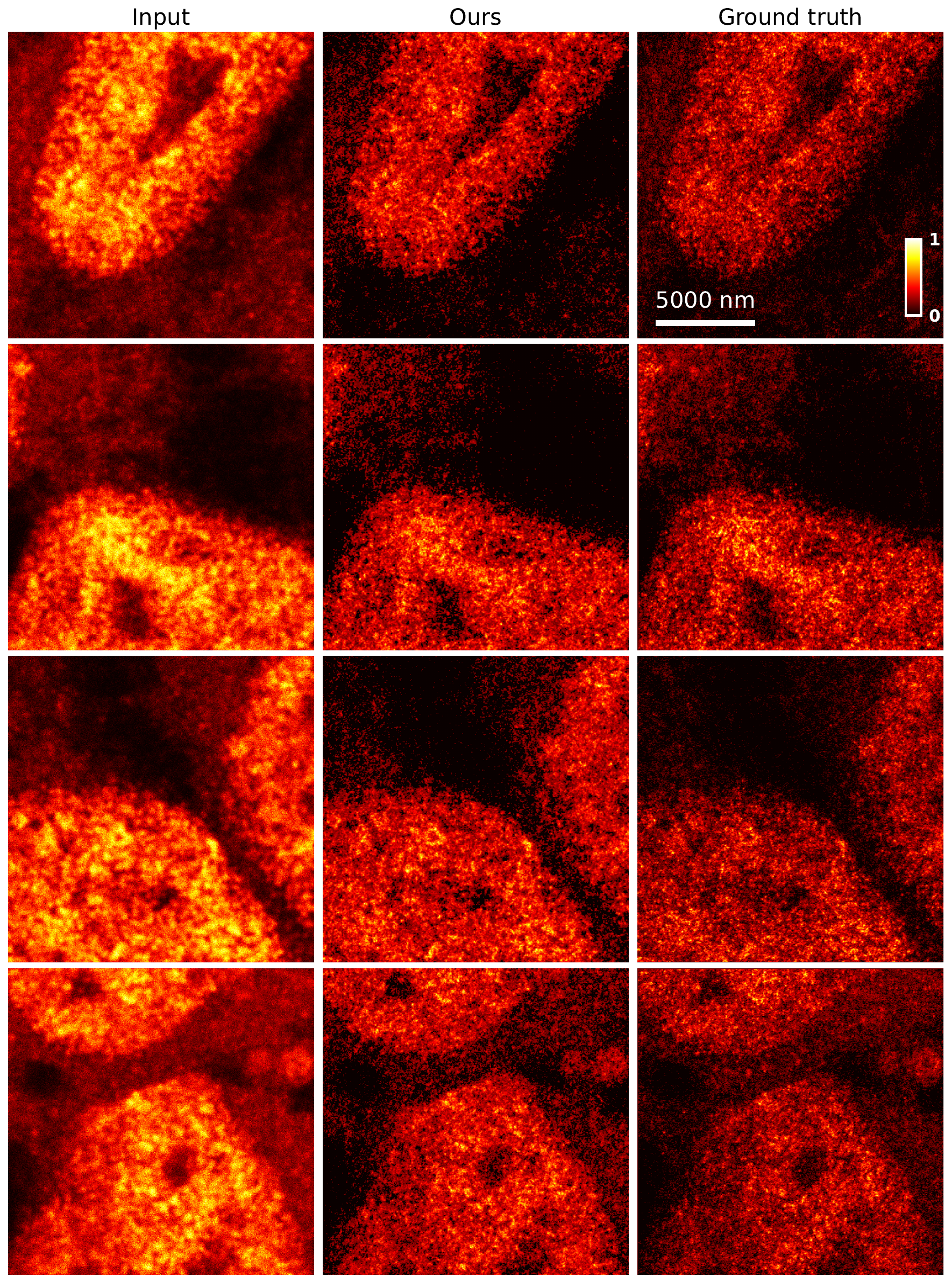}
    \caption{Additional microscopy image super-resolution results of HeLa cells.}
    \label{fig:supp_hela}
\end{figure}

\begin{figure}[H]
	\centering
    \includegraphics[width=1\columnwidth]{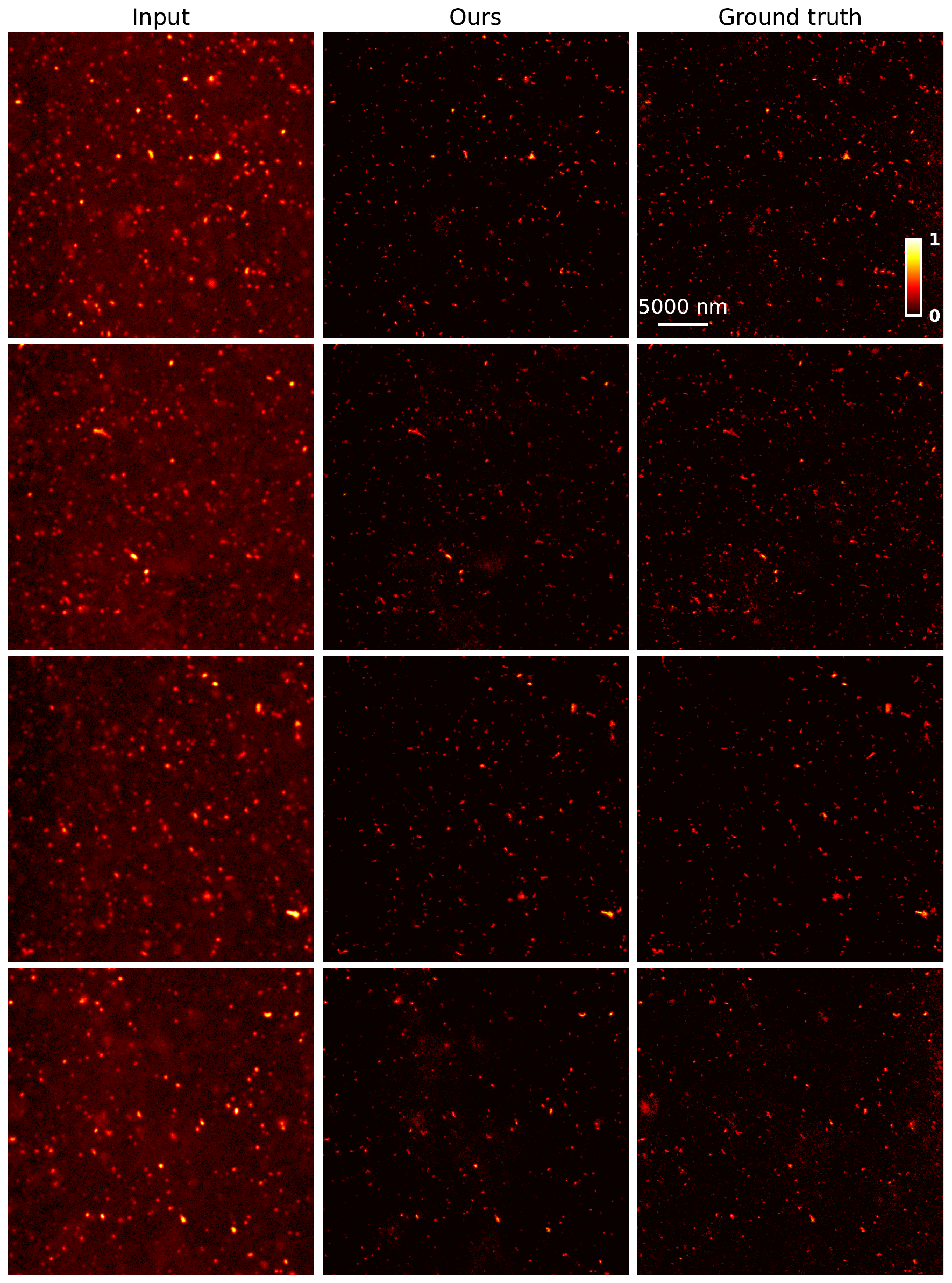}
    \caption{Additional microscopy image super-resolution results of nanobeads.}
    \label{fig:supp_nano}
\end{figure}

\subsection{Fast sampling result}
Here we present the sampling results of our method(MSCGM) and BBDM of each sampling steps (4, 16, 64, 256) on the Set14 and DIV2k validation dataset. Figure~\ref{fig:4-1000_1} show the sampling results of our method and BBDM with various sampling steps on Set14 dataset, and Fig.~\ref{fig:4-1000_2} illustrates the improvement of sampling quality with respect to the number of sampling steps from 4 to 1000 on DIV2K validation dataset. Remarkbly, our method recovers high-quality image considerably faster in fewer sampling steps, confirming its superiority in sampling speed compared to competitive diffusion models.

\begin{figure*}[h]
	\centering
    \includegraphics[width=\linewidth]{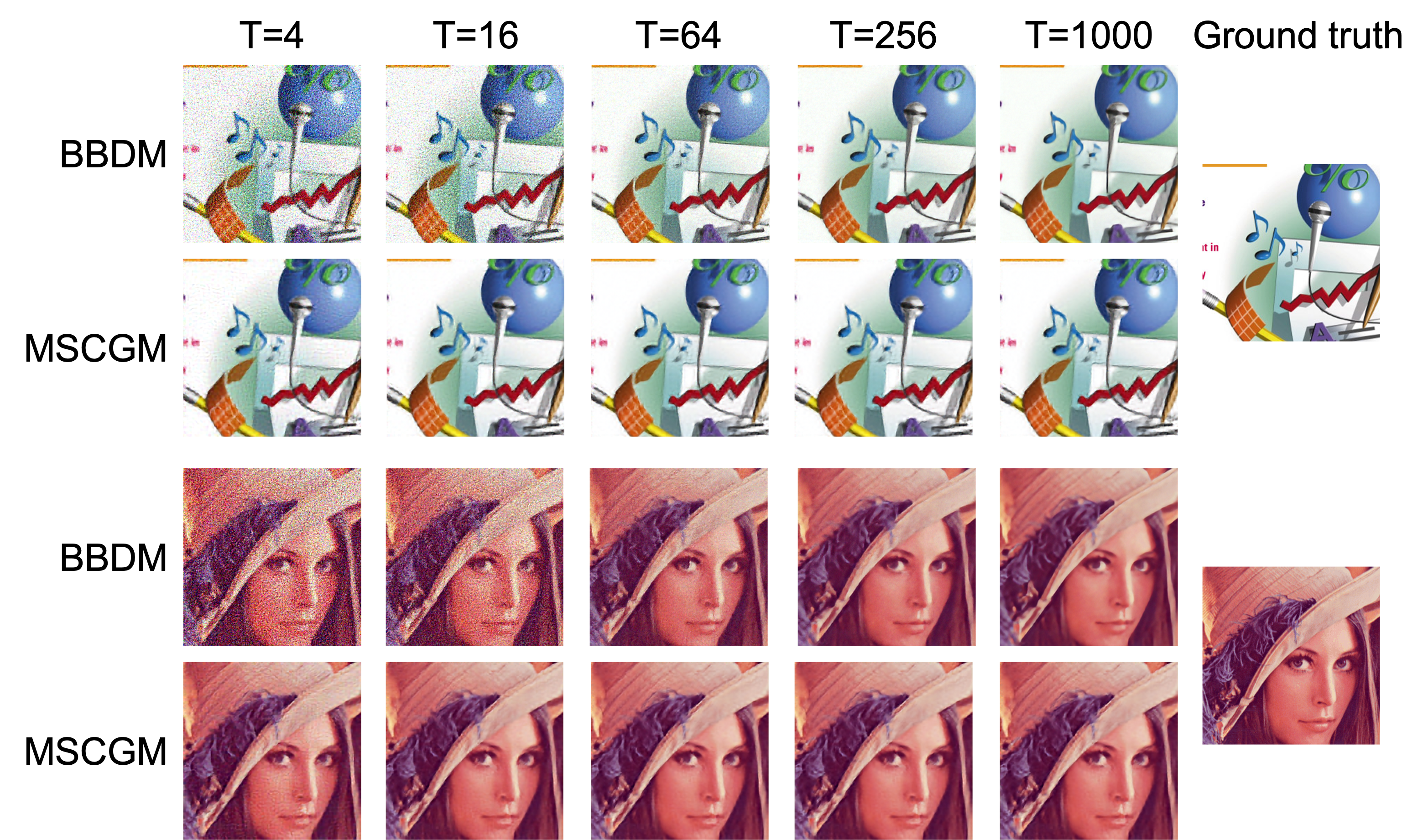}
    \caption{Sampling results of our method (MSCGM) and BBDM with various sampling steps from 4 to 1000. Images sampled from $64\times 64$ LR images in Set14.}
    \label{fig:4-1000_1}
\end{figure*}

\begin{figure*}[h]
	\centering
    \includegraphics[width=\linewidth]{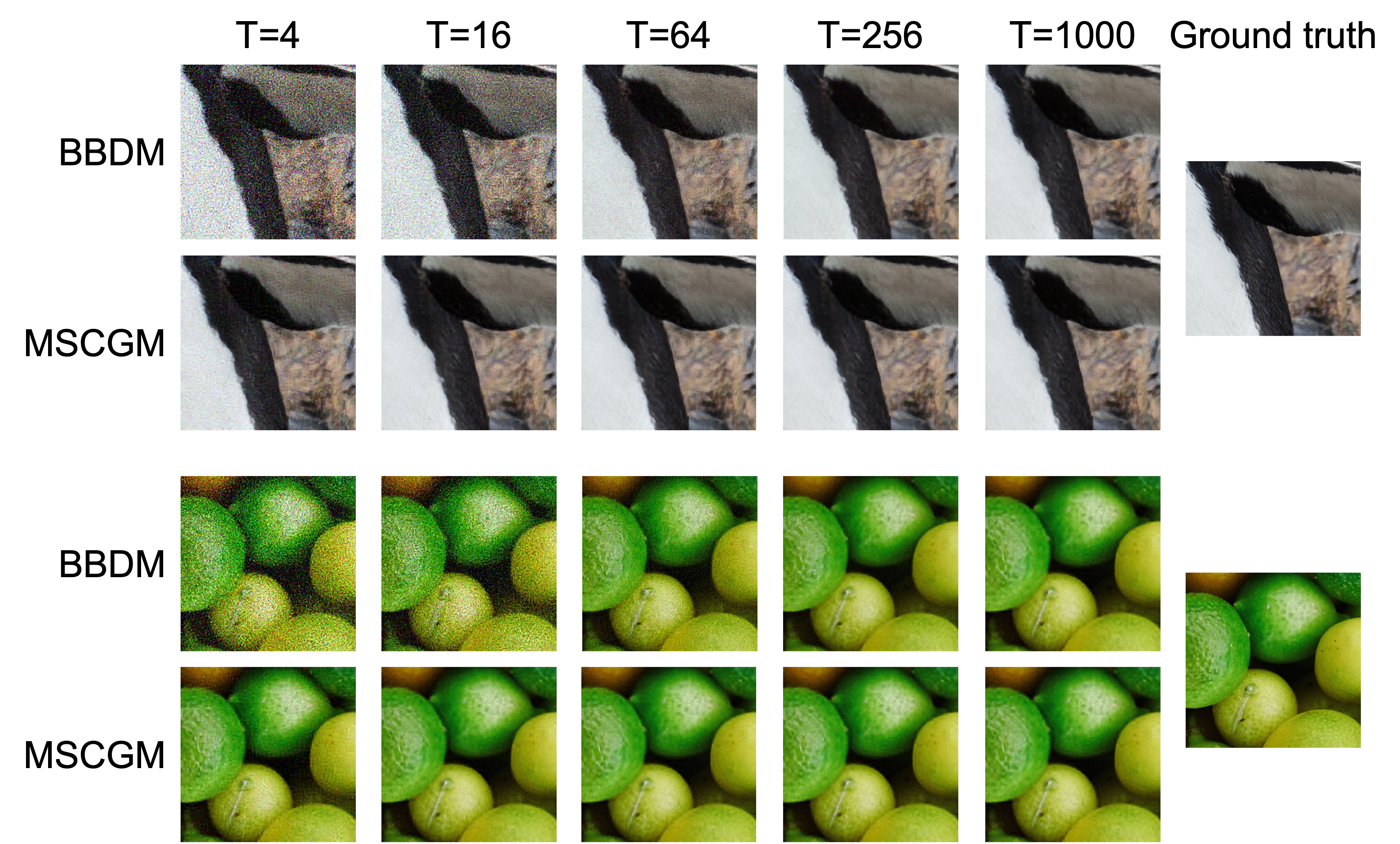}
    \caption{Sampling results of our method (MSCGM) and BBDM with various sampling steps from 4 to 1000. Images sampled from $64\times 64$ LR images in DIV2K validation dataset.}
    \label{fig:4-1000_2}
\end{figure*}

\end{document}